\title{Complexity of Membership and Non-Emptiness Problems in Unbounded Memory Automata} 
\author{Clément Bertrand
\and
Cinzia {Di Giusto}
\and
Hanna Klaudel
\and
Damien Regnault}
\date{}
\begin{document}

\maketitle

\begin{abstract}
  We study the complexity relationship between three models of unbounded memory automata: $nu$-automata (\nuauto), Layered Memory Automata (\lama)and History-Register Automata (\hra). These are all extensions of finite state automata with unbounded memory over infinite alphabets. We prove that the membership problem is NP-complete for all of them, while they fall into different classes for what concerns non-emptiness. The problem of non-emptiness is known to be Ackermann-complete for \hra, we prove that it is PSPACE-complete for \nuauto.
\end{abstract}

\section{Introduction}

We study unbounded memory automata for words over an infinite alphabet, as introduced in \cite{Kaminski1994, Neven04}.
Such automata model essentially dynamic generative behaviours, \ie programs that generate an unbounded number of distinct resources each with its own unique identifier (e.g. thread creation in Java, XML). For a detailed survey, we refer the reader to \cite{BertrandKP22,Kara16}. 
We focus, in particular, on three formalisms, $\nu$-automata (\nuauto) \cite{DBLP:conf/acsd/DeharbeP14,BertrandPKL18}, 
Layered Memory Automata (\lama) \cite{BertrandKP22} and \hra for History-Register Automata \cite{GT2016}. 
All these models are extensions of finite state automata with a memory storing letters. 
The memory for \hra is composed of registers (that can store only one letter) and histories (that can store an unbounded number of letters). Among the distinctive features of \hra we have the ability to reset registers and histories, and to select, remove and transfer individual letters. Whereas 
the memory for the other two models consists of a finite set of variables. 
In those formalisms, variables must satisfy an additional constraint, referred to as injectivity, meaning that they cannot store shared letters. 
Moreover, variables can be emptied (reset) but single letters cannot be removed. We know that \lama are more expressive than \nuauto as the former are closed under intersection while it is not the case for the latter ones. 

We tackle two problems: membership and non-emptiness. From a practical point of view,  automata over infinite alphabets can be used to identify patterns in link-stream analysis \cite{BertrandPKL18}. In such a scenario, the alphabet is not known in advance (open systems) and runtime verification can help to recognise possible attacks on a network by looking for specific patterns. This problem corresponds to checking whether a pattern (word) belongs to a language (the membership problem). Concerning non-emptiness, this is the “standard” problem to address while considering automata in general.

Fig.  \ref{fig:graphe_expre} depicts the unbounded memory automata known in the literature (to the best of our knowledge). An implementation exists for \nuauto  and \lama which includes an implementation of the membership algorithm, but we have not found anything neither for Data Automata (DA) nor Class Memory Automata (CMA). Both DA and CMA are incomparable classes wrt to \hra, hence we chose not to consider them. Fresh-Register automata (FRA), nu-automata, LaMA, HRA are instead related from the expressiveness point of view. Given the similarities between those formalisms, the existence of implementations and the lack of complexity results we find it natural to consider these classes of automata.

Application-wise, \nuauto, called resource graphs in \cite{DBLP:conf/acsd/DeharbeP14}, model the use of unbounded resources in the $\pi$-calculus, aiming at minimizing them. Then, runtime verification on link-streams was the initial motivation for the introduction of (timed)\nuauto \cite{BertrandPKL18}. In subsequent works, LaMA have been introduced to be able to construct the synchronous product, hence being able to express the synchronization of resources. This entails the closure by intersection, which is interesting when one wants to define a language of expressions, an extension of (timed) regular expressions, which was proposed in the PhD thesis of Clément Bertrand \cite{bertrand:tel-03172600}.

\begin{figure}
    \centering
    \begin{tikzpicture}[xscale=.9, yscale=1.2]
\node at (-1,3.0) {\large \underline{Bounded memory}};
\node at (6,3.0) {\large \underline{Unbounded memory}};

\node [draw, rounded corners , thick] (FMA) at (-2.3,2) {FMA\cite{Kaminski1994}};
\node [draw, rounded corners , thick] (VFA) at (-2.3,0) {VFA \cite{Grumberg10}};
\node [draw, rounded corners , thick] (FRA) at (2.8,2) {FRA \cite{Tzevelekos11}};
\node [draw, rounded corners , thick] (GRA) at (0.2,0) {GRA \cite{KaminskiZ10}};
\node [draw, rounded corners , thick,fill=yellow] (nu) at (2.6,1) {\nuauto \cite{BertrandPKL18}};
\node [draw, rounded corners ,  thick, fill=yellow] (ACM) at (5.2,1) {\lama \cite{BertrandKP22}};
\node [draw, rounded corners , thick, fill=yellow] (HRA) at (8.3,1) {\hra \cite{GT2016}};
\node [draw, rounded corners , thick] (DA) at (11.2,1) {\begin{tabular}{c}DA\cite{BojanczykDMSS11}\\CMA \cite{BJORKLUND2010702}\end{tabular}};

\draw [double, very thick] (1.5,-1.1)--(1.5,2.7);

\draw [->, thick] (FMA)--node[above]{\cite{Tzevelekos11}}(FRA);
\draw [->, thick] (VFA)--node[above]{\cite{Grumberg10}}(GRA);
\draw [->, thick] (FMA)--node[above]{\cite{KaminskiZ10}}(GRA);
\draw [->,  thick] (GRA)--node[below]{\cite{BertrandKP22}}(ACM);
\draw [->,  thick] (FRA)--node[right]{\cite{BertrandKP22}}(ACM);
\draw [->, thick] (FRA)--node[above]{\cite{GT2016}}(HRA);
\draw [->,  thick] (ACM)--node[below]{\cite{BertrandKP22}}(HRA);
\draw [->, thick] (FMA)--node[above]{\cite{BertrandPKL18}}(nu);
\draw [->,  thick] (nu)--node[above]{\cite{BertrandKP22}}(ACM);
\draw [thick, dotted] (FMA)--node[left]{\cite{Grumberg10}}(VFA);
\draw [thick, dotted] (HRA)--node[above]{\cite{GT2016}}(DA);
\draw [->, thick] (FRA) to[out = 0, in = 160]node[above]{\cite{GT2016}}(DA);
\draw [->, thick] (VFA) to[out=-10, in = 195]node[above]{\cite{Grumberg10}} (DA);
    \end{tikzpicture}
    \caption{A classification of memory automata 
    from \cite{BertrandKP22}. Arrows represent strict language inclusion, while dotted lines denote language incomparability. Formalisms studied here are in yellow.}
    \label{fig:graphe_expre}
\end{figure}
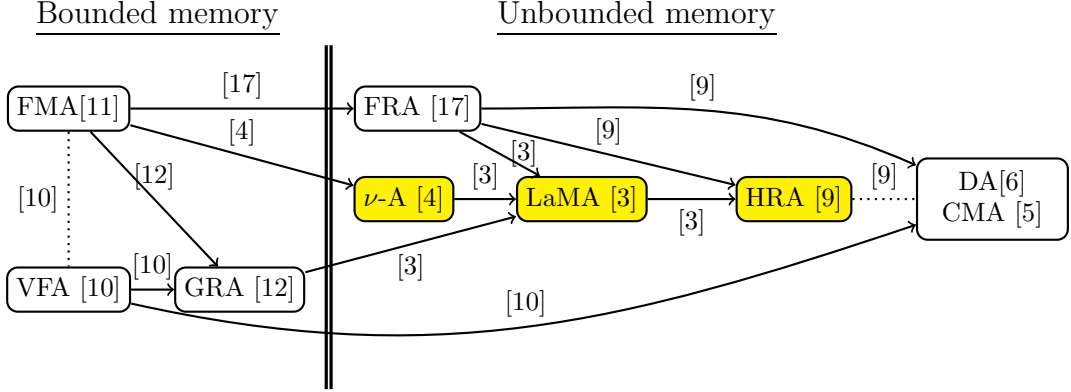
For a precise discussion on the relations among these formalisms see \cite{BertrandKP22}, here we just recall the hierarchy: cfr. Fig. ~\ref{fig:graphe_expre}. 
Apart from expressiveness, a number of questions concerning complexity remains open. We know  that checking whether the language recognized by an \hra is empty  or not (referred to as the non-emptiness problem) is Ackermann-complete \cite{GT2016}. But the question has not been addressed for \nuauto and \lama. 
For finite-memory automata (FMA), it is known that membership (testing whether a word belongs to the language) and non-emptiness are NP-complete \cite{SAKAMOTO2000297}. Knowing whether a language is included in another is undecidable for FMA when considering their non-deterministic version, but it is PSPACE-complete for deterministic ones \cite{MurawskiRT18}.
In \cite{Grumberg10} it is shown that the non-emptiness problem for Variable Finite Automata (VFA) is NL-complete, while membership is NP-complete. For FRA and Guessing-Register Automata (as they are called in \cite{Kara16}) we only know that both problems are decidable but we do not know the accurate complexity class. 
Finally, for data-languages where data-words are sequences of pairs of a letter from a finite alphabet and an element from an
infinite set and the latter can only be compared for equality, the non-emptiness problem for FMA is PSPACE-complete \cite{Demri09}, for DA and CMA, membership and non-emptiness  are only  shown to be decidable, but no complexity is given \cite{BojanczykDMSS11,BJORKLUND2010702}.

\subparagraph{Contributions.}
In this paper, we close some open problems on the complexity of  membership  and non-emptiness.
We first prove that testing membership for \hra, \nuauto and \lama is an equivalent problem. Then we address complexity and  show that the problem is NP-complete with a reduction of 3-SAT to \lama.
Non-emptiness appears to be a much harder problem. 
We show that the non-emptiness problem is PSPACE-complete for \nuauto by reducing   TQBF (True Fully Quantified Boolean Formula) to \nuauto. 

The paper is organized as follows. The three formalisms are introduced and the main differences recalled quickly in Section~\ref{sec:formalism}. Section~\ref{sec:member} presents the complexity of the membership problem and Section~\ref{sec:empty} the non-emptiness one. Finally, Section~\ref{sec:concl} concludes with some remarks.

\section{Formalisms}\label{sec:formalism}

Our formalisms are all generalizations of finite state automata with memory over an infinite alphabet $\alphab$. For all of them,  configurations $(q,M)$ are  pairs of a state of the automaton plus a memory context. A memory context assigns a set of letters  to each identifier of the memory, variable or history depending on the formalism under consideration. 

\begin{definition}[Memory context] \label{def:contexte-memoire}
Given a finite set of memory identifiers or variables $V$
and an infinite alphabet $\alphab$, we define a memory context $M$ as an assignment: $M: V \rightarrow 2^\alphab$
where $M(v) \subset \alphab$ is the finite set of letters \emph{assigned} to $v$. 
\end{definition}


The definition of accepted language common to the  three formalisms is, as customary:
\begin{definition}[Accepted language]
For an automaton $A$ (\lama, \nuauto or \hra), the language of $A$ is the set of words recognized by $A$: $\lang{A}=\{ w\in \alphab^* \mid  (q_0,M_0) \xRightarrow[A]{w} (q_f,M) \mbox{ s.t. } q_f\in F\}$, where $\xRightarrow[A]{w}$ is the extension to sequences of transitions of $\xrightarrow[A]{u}$.
\end{definition}

\subsection{n-Layered Memory Automata}

We start with $n$-Layered Memory Automata ($n$-\lama). The idea is that finite state automata are enriched with $n$ layers each containing a finite number of variables.  Variables on the same layer satisfy the   
\emph{injectivity} constraint:  variables on a given layer $l \in [1,n]$  (denoted $v^l$) do not share letters of the alphabet:  $\forall v_1 \neq v_2 \in V, M(v_1^l) \cap M(v_2^l) = \emptyset$. 
Upon reading a letter, a transition can test if the letter is already stored in a set of variables and add a letter to a set of variables. The non-observable transition ($\varepsilon$-transition) empties a set of variables.

\begin{definition}[$n$-\lama] An $n$-\lama $A$ is defined by the tuple $(Q, q_0, F, \Delta, V, n, M_0)$, where:
\begin{itemize}
\item $Q$ is a finite set of states,
\item $q_0 \in Q$ and $F \subseteq Q$ are respectively an initial state and a set of final states,
\item $\Delta$ is a finite set of transitions,
\item $n$ is the number of layers, and $V$ is a finite set of variables, denoted $v^l$ with $l\in [1,n]$,
\item $M_0: V \mapsto 2^\alphab$ is an initial memory context. 
\end{itemize}
\end{definition}

A \emph{fresh letter at layer $l$}, is a letter that is associated with no variable of this layer. 
The set of transitions $\Delta = \setobs \cup \setsilent$ encompasses two kinds of transitions with $\setobs$  the set of \emph{observable transitions} that consume a letter of the input
and $\setsilent$  the set of \emph{non-observable transitions}, that  do not consume any letter of the input but can reset a set of variables.

\begin{definition}[Observable transition]\label{def:transitionlama}
An observable transition is a tuple of the form:
$\quad \delta = (q, \alpha, q') \in \setobs$
where:
\begin{itemize}
\item $q,q' \in Q$ are the source and destination states of the transition,
\item $\alpha: [1,n] \rightarrow (V \times  \{\reads,\writes\})\cup \{ \sharp \} $, such that $\alpha(l) = (v^l, \mathbf{x})$  for $\mathbf{x} \in  \{\reads,\writes\} $ and  for some $v^l \in V$ indicates for each layer $l$ which variable is examined by the transition.

\end{itemize}
\end{definition}

Notice that only one variable per layer can be examined, and it is not possible to have $\alpha(l) = (v^k, \mathbf{x})$ with $l \neq k$.
The special symbol $\sharp$ indicates that no variable is to be read or written for a specific layer.

\begin{remark}[Any-letter transition]
If  $\forall l \in [1,n], \alpha(l) = \sharp$ (i.e., no variable is examined) then the transition is executed consuming whatever letter is in input.
\end{remark}

\begin{definition}[Non-observable transition] \label{def:eps_transitionlama}
A non-observable transition is a tuple of the form $\delta_\varepsilon = (q,\rset,q') \in \setsilent$ where:
\begin{itemize}
\item $q,q' \in Q$ are the source and destination states of $\delta_\varepsilon$,
\item $\rset \subseteq 2^{V}$ is the set of variables reset (i.e., emptied) by the transition.
\end{itemize}
\end{definition}

\begin{definition}
The semantics of an $n$-\lama $A=(Q, q_0, F, \Delta, V, n, M_0)$ is defined as:
\begin{itemize}
    \item An observable transition $(q, \alpha, q')$ can be executed on an input letter $u$ from memory context $M$ leading to $M'$: $(q,M) \xrightarrow[A]{u} (q',M')$ if for each  $\alpha(l) \neq \sharp$ such that $\alpha(l) = (v^l, \mathbf{x})$:
        \begin{itemize}
        \item if $\mathbf{x}= \reads$, then $u \in M(v^l)$ and  $M'(v^l) = M(v^l)$ ; 
        \item if $\mathbf{x}= \writes$, then $u$ is fresh for layer $l$ 
        and $u$ is added to  $v^l$ in the reached memory context: $M'(v^l) = M(v^l)\cup\{u\}$. 
        \end{itemize}
    All variables $v^l$ not labeled through $\alpha$ remains associated to the same letters : if $\alpha(l) = \sharp$ or $\alpha(l) = (v_1^l,x)$ and $v_1^l \neq v^l$ then $M'(v^l) = M(v^l)$. 
    \item A non-observable transition $(q,\rset,q')$ can be executed from memory context $M$ without reading any input letter leading to $M'$: $(q,M) \xrightarrow[A]{\varepsilon} (q',M')$, where $\forall v^l\in \rset: M'(v^l)=\emptyset$ and otherwise $M'(v^l)=M(v^l)$.
\end{itemize}
\end{definition}

\begin{example}\label{ex:lama}
    Let $\plang{p} =\{ u_1\dots u_s \mid \forall j, k>0, j\neq k, \ u_{j\cdot p} \neq u_{k\cdot p}  \}$, be the language recognizing words where  the letters at positions, which are multiples of $p$ are all different whereas the others are not constrained.  Fig.~\ref{fig:lama_Ap} depicts a $2$-\lama for  $\plang{2} \cap \plang{3}$.
\end{example}
\begin{figure}
    \centering
        \begin{tikzpicture}[xscale=2.5, yscale=1]
\node [draw, circle, double] (q1) at (1,0)  {$q_1$};
\draw [->](0.7,0)--(q1);
\node [draw, circle, double] (q2) at (2,0) {$q_2$};
\node [draw, circle, double] (q3) at (3,0) {$q_3$};
\node [draw, circle, double] (q4) at (4,0) {$q_4$};
\node [draw, circle, double] (q5) at (5,0) {$q_5$};
\node [draw, circle, double] (q6) at (6,0) {$q_6$};

\draw [->, thick] (q1)--node[above]{$\sharp$}(q2);
\draw [->, thick] (q2)--node[above]{$(X^1,\writes)$}(q3);
\draw [->, thick] (q3)--node[above]{$(Y^2,\writes)$}(q4);
\draw [->, thick] (q4)--node[above]{$(X^1,\writes)$}(q5);
\draw [->, thick] (q5)--node[above]{$\sharp$}(q6);
\draw [->, thick, rounded corners] (q6)--(6,-1)--node[above]{$(X^1,\writes), (Y^2,\writes)$}(1,-1)--(q1);

    \end{tikzpicture}

    \caption{ A $2$-\lama $A_{p}$
    recognizing 
    $\plang{2} \cap \plang{3}$ from Example \ref{ex:lama} 
    }\label{fig:lama_Ap}
   
\end{figure}

\subsection{{$\nu$}{nu}-automata}

 $\nu$-automata (\nuauto) can be seen as a restricted version of \lama with only one layer. 
Hence, each variable is constrained under  the  injectivity property, and no letter can be stored in more than one variable.

\begin{definition}[\nuauto]
A \nuauto is defined as a tuple $(Q, q_0, F, \Delta, V, M_0)$, where 
\begin{itemize}
\item $Q$ is a finite set of states containing an initial state $q_0 \in Q$ and a set of final ones $F \subseteq Q$,
\item $V$ is a finite set of variables that may initially be storing a finite amount of letters from the infinite alphabet $\alphab$, as specified by the initial memory context $M_0$, 
\item and $\Delta$ is a finite set of transitions.
\end{itemize}
\end{definition}

As before,  $\Delta = \setobs \cup \setsilent$ where  $\setobs$ is  the set of \emph{observable transitions} and $\setsilent$  is the set of \emph{non-observable} ones. 
Differently from \lama, observable transitions are decoupled in read and write transitions.

\begin{definition}[Observable transition]\label{def:transitionnu}
An observable transition can be of two kinds: 
 $(q, v, \reads, q')$ and $(q,  v,\writes, q')$ ($\reads$ for read and $\writes$ for write) where $q,q' \in Q$ are the source and destination states and  $v \in V$.
\end{definition}

\begin{definition}[Non-observable transition] \label{def:eps_transitionnu}
A non-observable transition is a tuple of the form $\delta_\varepsilon = (q,\rset,q') \in \setsilent$ where:
$q,q' \in Q$ are the source and destination states of $\delta_\varepsilon$,
 $\rset \subseteq 2^{V}$ is the set of variables reset by the transition.
\end{definition}

\begin{definition}
The semantics of a \nuauto $A = (Q, q_0, F, \Delta, V, M_0)$ is defined as:
\begin{itemize}
    \item An observable transition $(q, v, \mathbf{x}, q')$ reading input letter $u$ can be executed from memory context $M$ leading to $M'$: $(q,M) \xrightarrow[A]{u} (q',M')$ if for each  $v$:
    \begin{itemize}
    \item if $\mathbf{x}= \reads$, then  $u \in M(v)$ and $M'(v) = M(v)$; 
    \item if $\mathbf{x}= \writes$, then $u$ is fresh in $M$ 
    and  $u$ is added to  $v$ in the reached memory context: $M'(v) = M(v)\cup\{u\}$.  
    \end{itemize}
    All other variables $v_1 \neq v$, remains associated to the same letters $M'(v_1) = M(v_1)$. 
    \item A non-observable transition $(q,\rset,q') \in \setsilent$ can be executed from memory context $M$ leading to $M'$: $(q,M) \xrightarrow[A]{\varepsilon} (q',M')$ without reading any input letter, where $\forall v\in \rset: M'(v)=\emptyset$ and otherwise $M'(v)=M(v)$.
\end{itemize}
\end{definition}

\begin{remark} 
Analogously to \lama, we consider \textit{any-letter transitions}, denoted by $(q,\sharp,q')$ with $\sharp \not\in \alphab$, which are enabled whenever a letter is read and the memory context of the target configuration is the same as the origin's one. 
\end{remark}

Notice that any-letter transitions do not alter the expressive power of \nuauto nor the complexity of its problems. Indeed, it is a sort of macro that can be encoded by  a set of transitions searching for the presence of a letter or its freshness over the whole set $V$. 
To do so, one needs as many reading transitions as variables to allow the firing with any letter in memory. For fresh letters, one needs a transition writing in an extra variable, which is reset immediately after. 

\subsection{History-Register Automata}

\hra are automata provided with a finite set $H$ of histories, i.e., variables storing a finite subset of letters of the infinite alphabet $\alphab$. 
To simplify the presentation, we consider \hra defined only with histories and no registers. The latter does not provide additional expressiveness \cite{GT2016}. An important distinction between  \hra and \lama or \nuauto is that different histories are allowed to store the same letter (\ie there is no injectivity constraint).
Thus,  an observable transition is annotated with the exact set of histories that should contain the letter read to enable it. This entails that for each observable transition the whole memory has to be explored while \lama allow ignoring some layers using symbol $\sharp$ (this can be crucial while  implementing the formalisms\footnote{Implemented in tool available at 
\url{https://github.com/clementber/MaTiNA/tree/master/LaMA}}).

\begin{definition}[\hra]
A History-Register Automata is defined as a tuple of the form $A = (Q, q_0, F, \Delta, H, M_0)$
where $Q$ is the set of states, $q_0$ the initial one, $F$ the set of final ones, $\Delta$ the set of transitions, $H$ a finite set of histories and $M_0$ the initial memory context.
The set of transitions $\Delta = \setobs \cup \setsilent$ are of the form:
\begin{itemize}
    \item  $(q,H_r,H_w,q')\in \setobs$ where $H_r, H_w \subseteq H$ (for read and write),  which is an observable transition and   
    \item $(q,H_\emptyset,q')\in \setsilent$ where $H_\emptyset \subseteq H$, which is a non-observable transition. 
\end{itemize}
\end{definition}

An observable transition $(q,H_r,H_w,q')$ is enabled if letter $u$ is  present in exactly all the histories in $H_r$ and not present in $H \setminus H_r$. After the transition, $u$ is present only in the histories in $H_w$. Notice that this allows moving an input letter from one set of histories to another, or even forgetting it if $H_w = \emptyset$. This is not possible in \nuauto and \lama. Finally, if $H_r = \emptyset$ then the input letter has to be fresh (absent from every history).

\begin{definition}
The semantics of an \hra $A = (Q, q_0, F, \Delta, H, M_0)$ is defined as:
\begin{itemize}
    \item an observable transition $(q,H_r,H_w,q')$ is enabled for memory context $M$ 
    when reading letter $u \in \alphab$: $(q,M) \xrightarrow[A]{u} (q',M')$ if $u\in M(h_r) \Leftrightarrow h_r\in H_r$
    and $\forall h_w\in H_w: M'(h_w)=M(h)\cup\{u\}$ and $\forall h\not\in H_w$: $M'(h)=M(h) \setminus \{u\}$;

    \item a non-observable transition $(q,H_\emptyset,q')$ is enabled for any memory context $M$ and allows to move from configuration $(q,M)$ to $(q',M')$: $(q,M) \xrightarrow[A]{\varepsilon} (q',M')$, where all the histories in $H_\emptyset$ have been reset in $M'$. 
\end{itemize}
\end{definition}

\begin{example}\label{ex:hra}
Fig. ~\ref{fig:HRA_example} depicts an HRA that, with an initially empty memory context, which recognizes the language
$$ \begin{array}{ll}\{ u_1 u_2 \dots u_n \mid & \exists k < n, \forall i,j \in [1,k], u_i =u_j \Leftrightarrow i = j, \\ & \forall m \in ]k,n], \exists p  < m, u_p = u_m, p\text{ mod } 2 \neq m\text{ mod } 2, \not\exists q \in ]p,m[, u_m = u_q \} \end{array}$$
The two transitions looping between states $q_{ow}$ and $q_{ew}$ allow us to recognize words where the first $k$ letters are all different from each other. Letters are stored in histories $O$ (odd) and $E$ (even) to remember the parity of the position they are read at.
The transitions between states $q_{er}$ and $q_{or}$ allow us to recognize  words whose suffix is only composed of repetitions of the previous $k$ letters, with the additional constraint that those letters can only occur at a position  with opposed parity wrt the previous occurrence. Thus, if a letter was read for the last time at an even position, it is stored in history $E$ and can only be read in an odd position. Once it is read, it is transferred to the $O$ history to remember it can only be read at an even position the next time.
\end{example}

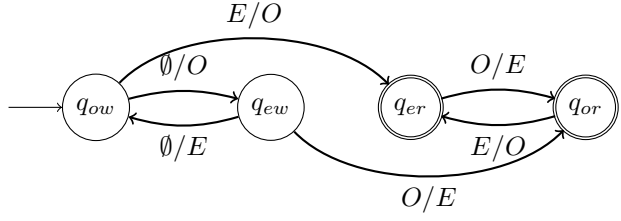
\begin{figure}
    \centering
    \begin{tikzpicture}[xscale=2.3, yscale=1.5]
\node [draw, circle](qow) at (0,0) {$q_{ow}$};
\draw [->](-0.5,0)--(qow);
\node [draw, circle](qew) at (1,0) {$q_{ew}$};
\node [draw, circle, double](qer) at (1.8,0) {$q_{er}$};
\node [draw, circle, double](qor) at (2.8,0) {$q_{or}$};
\draw [->, thick] (qow) to[out=18, in = 162]node[above]{$\emptyset/O$} (qew);
\draw [->, thick] (qew) to[out = 200, in = -20]node[below]{$\emptyset/E$} (qow);
\draw [->, thick] (qow) to[out = 55, in = 125]node[above]{$E/O$} (qer);
\draw [->, thick] (qew) to[out = -55, in = -125]node[below]{$O/E$} (qor);
\draw [->, thick] (qor) to[out = 200, in = -20]node[below]{$E/O$} (qer);
\draw [->, thick] (qer) to[out=20, in = 160]node[above]{$O/E$} (qor);

    \end{tikzpicture} 
    \caption{Example of an \hra $A_h$.} 
    \label{fig:HRA_example}
\end{figure}

\section {Complexity of the membership problem}\label{sec:member}

We  know that each \nuauto can be encoded into a \lama and respectively each \lama can be encoded into an \hra both recognizing  the same language \cite{BertrandKP22}.
The encoding from \lama to \hra is  exponential in the number of layers, hence we know that the complexity of problems for \hra gives an upper bound to the complexity of the same problem for \lama and \nuauto.
In this section, we show that the complexity of the membership problem (i.e.,  given an automaton $A$ and a word $w$  decide whether $w \in \lang{A}$), for these three automata models, falls in the same class. 
To do so, we show that the membership problem for \lama can be simulated using  \nuauto, and the same can be done for \hra   using \lama.

\subparagraph*{Simulating the membership for \lama in {$\nu$}{nu}-automata}
The idea is to represent an n-\lama  as a product of $n$ \nuauto, one for each layer. The main limitation is that having just one layer makes  the injectivity constraint stronger. Indeed, it is not possible to trivially treat a same letter that is stored on different layers.
To cope with this difficulty, we rename the word under consideration, replacing consistently each letter with a sequence 
of new ones - one per layer of the \lama: i.e., for an $n$-\lama the letter $u\in w$ is replaced by the letters $u^1, ..., u^n$ where all the $u^i$ are different in order to have the letters belonging to different layers all distinct from each other.
This renaming is always possible as the alphabet $\alphab$ is infinite. For example, for the word $aba$, a consistent renaming, for a 2-LaMA, could produce 
$ a^1\,a^2\,b^1\,b^2\,a^1\,a^2$.

\begin{definition}[Renaming]
 $\xi^n : \alphab \rightarrow \alphab^n$ is a \emph{renaming} function that given a letter $u \in \alphab$ generates a new sequence of $n$ letters $u^1\dots u^n$ with for all $i\neq j \in [1,n]$ $u^i \neq u^j$ and such that if $u_1 \neq u_2$ then for all $i, j \in [1,n]$, $u_1^i \neq u_2^j$. 
$\xi^n(u_1\dots u_m) = u_1^1\dots u_1^n  \dots u_m^1 \dots u_m^n$ is  its pointwise extension to words .    
\end{definition}

Let $A = (Q, q_0, F, \Delta, V, n, M_0)$ be an $n$-\lama and $w= u_1\dots u_m \in \alphab^*$. 
We know that $w  \in \lang{A}$ if and only if there is a finite  sequence of transitions  such that for some  $M_f$, $(q_0,M_0) \xRightarrow[A]{w} (q_f,M_f)$ with $q_f \in F$.
It is then possible to construct a \nuauto that accepts $\xi^n(w)$, which simulates the recognition process of the $n$-\lama over the word $w$. 
To do so, we encode every observable transition of $A$ into a sequence of transitions successively simulating the constraints applied to variables  of each layer. Moreover,  we apply the  renaming function $\xi^n$ to the initial memory context.
In order to simplify the notations, in the following, we denote by
$\proj{x}{k}$ the projection onto the $k$-th element of tuple $x$, e.g., $\proj{(a,b,c)}{2} = b$.

\begin{definition}[Encoding of a memory context]
Let $M$ be the memory context over the set of variables $V$ over $n$ layers, 
then $\forall v^l \in V$, its renaming through $\xi^n$, is defined as 
$\enco{M}_\xi(v^l) = \{ \proj{\xi^n(u)}{l} \mid u \in M(v^l)\}$. 
\end{definition}

\begin{definition}[Encoding of a \lama]
Let $A = (Q, q_0, F, \Delta, V, n, M_0)$ be an $n$-\lama, then the \nuauto $\enco{A}_\xi = (Q', q_0', F', \Delta', V', M_0')$
is the encoding of $A$ through the renaming $\xi^n$,
where:
\begin{itemize}
    \item $Q' = Q \cup Q_o$ and the set of states $Q_o = \{ q_\delta^l \mid \delta = (q,\alpha, q') \in \Delta, l \in [2,n] \}$ is used by the sequence of transitions simulating each observable transition of $A$, 
  ${q_0}' = q_0$ and $F' = F$;
    \item $V' = V $ is the set of variables of $A$ flattened on one layer;
    \item ${M_0}'= \enco{M_0}_\xi$ is the initial memory context of $A$ renamed in case it is not initially empty; 
    \item $\Delta' = {\Delta'}_o \cup \setsilent$ where $\setsilent$
    is the set of  all  non-observable transitions of $A$, and 
    ${\Delta'}_o$ contains the encoding of every observable transition $\delta = (q, \alpha, q')$ of $A$, which is a  sequence of transitions 
    with $q_\delta^1=q$ and $q_\delta^n=q'$ such that
   
    $$\begin{array}{lll}{\Delta'}_o & = & 
    \{(q_\delta^l,v^l,\mathbf{x},q_\delta^{l+1}) \mid \delta = (q,  \alpha, q') \in \Delta, l\in [1,n{-}1], \alpha(l)=(v^l, \mathbf{x})\}
    \\&\cup & \{(q_\delta^l,\sharp, q_\delta^{l+1}) \mid \delta = (q,  \alpha, q') \in \Delta, l\in [1,n{-}1], \alpha(l)=\sharp  \}.
    \end{array}$$
\end{itemize}

\end{definition}
Notice that the language accepted by the encoded \nuauto  $\enco{A}_\xi$   of a \lama $A$ is an over-approximation of the language accepted by $A$:   $\xi(\lang{A}) \subseteq \lang{\enco{A}_\xi}$. 
They are  equal only when the \lama has  one layer (i.e., $n=1$).
Nonetheless, this construction may be used to test the membership of a word $w$ to $\lang{A}$. The proof is a simple induction on the length of the derivation of $w$ and $\xi^n(w)$.

\begin{restatable}{theorem}{lamatonu}
    Let $A$ be an $n$-\lama.
$w \in \lang{A}$ if and only if $\xi^n(w) \in \lang{\enco{A}_\xi}$.
\end{restatable}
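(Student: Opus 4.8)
The plan is to prove both implications by induction on the length of the run, establishing a precise correspondence between configurations of $A$ and configurations of $\enco{A}_\xi$. The key invariant is: after $A$ has read the prefix $u_1\dots u_i$ and reached configuration $(q, M)$, the \nuauto $\enco{A}_\xi$ after reading $\xi^n(u_1\dots u_i)$ reaches configuration $(q, \enco{M}_\xi)$, where $q \in Q$ (as opposed to one of the intermediate states in $Q_o$), and moreover the intermediate states $q_\delta^l$ are traversed exactly while $\enco{A}_\xi$ is in the middle of simulating a single observable transition $\delta$ of $A$. Non-observable transitions are shared verbatim between the two automata and commute with the encoding $\enco{\cdot}_\xi$ trivially, so the real content is in handling observable transitions.

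For the forward direction, suppose $A$ fires an observable transition $\delta = (q, \alpha, q')$ on letter $u$, taking $(q, M) \xrightarrow{u} (q', M')$. I would show that $\enco{A}_\xi$ can fire the sequence $(q_\delta^1, \dots)\dots(\dots, q_\delta^n)$ on the block $u^1 \dots u^n = \xi^n(u)$, reaching $(q', \enco{M'}_\xi)$. At layer $l$: if $\alpha(l) = (v^l, \reads)$ then $u \in M(v^l)$, so by definition $u^l = \proj{\xi^n(u)}{l} \in \enco{M}_\xi(v^l)$, and the read transition $(q_\delta^l, v^l, \reads, q_\delta^{l+1})$ fires without changing memory, matching $M'(v^l) = M(v^l)$. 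If $\alpha(l) = (v^l, \writes)$ then $u$ is fresh for layer $l$ in $M$; here I must check that $u^l$ is fresh in $\enco{M}_\xi$ (viewed as a single-layer memory over all of $V'$). This uses the key property of $\xi^n$: since $u$ is fresh at layer $l$, $u^l \notin \enco{M}_\xi(v^l)$ for every $v^l$; and since $\xi^n$ sends distinct letters to sequences that are pairwise disjoint across all coordinates, $u^l$ cannot coincide with $\proj{\xi^n(u')}{k}$ for any $u' \neq u$ stored in any variable, nor with $\proj{\xi^n(u)}{k} = u^k$ for $k \neq l$. Hence $u^l$ is globally fresh in $\enco{M}_\xi$, the write transition fires, and $\enco{M}_\xi(v^l)$ gains exactly $u^l = \proj{\xi^n(u)}{l}$, matching $\enco{M'}_\xi(v^l)$. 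If $\alpha(l) = \sharp$, the any-letter transition fires on $u^l$ and nothing changes, again matching $M'(v^l) = M(v^l)$ for all $v^l$.

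For the backward direction, I would argue that any accepting run of $\enco{A}_\xi$ on $\xi^n(w)$ must be of this block-structured form: whenever it enters an intermediate state $q_\delta^l \in Q_o$ it has no choice but to proceed through $q_\delta^{l+1}, \dots, q_\delta^n$ consuming one input letter per layer (the only transitions out of $q_\delta^l$ are the ones in ${\Delta'}_o$ for that particular $\delta$), and the $\varepsilon$-transitions can only occur at states in $Q \subseteq Q'$. Since $\xi^n(w)$ comes in blocks of $n$ letters, each block $u_i^1 \dots u_i^n$ must be consumed by exactly one such simulating sequence for some transition $\delta_i = (q, \alpha, q')$ of $A$ with source matching the current state; then reading back the constraints layer by layer shows $u_i$ satisfies the read/write conditions of $\delta_i$ in the decoded memory, so $A$ can fire $\delta_i$ on $u_i$. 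The freshness direction needs care in the reverse: from "$u_i^l$ fresh in the single-layer memory" I need "$u_i$ fresh for layer $l$ in $M$", which again follows from injectivity of $\xi^n$ on coordinates, plus the fact that the memory contents of $\enco{A}_\xi$ along this run are always of the form $\enco{M}_\xi$ for some memory $M$ of $A$ (so no "spurious" letters like $u^k$ for $k \neq l$ can actually be present to cause a false freshness failure or success).

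The main obstacle I expect is this freshness bookkeeping: showing that global freshness of $u^l$ in the flattened single-layer memory $\enco{M}_\xi$ is \emph{equivalent} to layer-$l$ freshness of $u$ in $M$. The $(\Leftarrow)$ direction of that equivalence is where the disjointness clause of the renaming ($u_1 \neq u_2 \Rightarrow u_1^i \neq u_2^j$ for all $i,j$, and $u^i \neq u^j$ for $i \neq j$) does all the work, and one must be careful to invoke the induction hypothesis that the \nuauto's reachable memory is exactly $\enco{M}_\xi$ — it never contains a stray coordinate of $u$ from a different layer — otherwise the equivalence could fail. Everything else is a routine unwinding of the two semantics.
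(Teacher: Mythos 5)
Your proposal is correct and follows essentially the same route as the paper: the paper factors the argument into a lemma for non-observable transitions (shared verbatim) and a lemma showing that one observable \lama step on $u$ corresponds exactly to the block of $n$ \nuauto steps on $\xi^n(u)$ (proved layer by layer, with the same case analysis on $\reads$/$\writes$/$\sharp$ and the same use of the coordinate-disjointness of $\xi^n$ to transfer freshness), and then concludes by induction on the derivation length. Your explicit emphasis on the invariant that reachable memories are exactly of the form $\enco{M}_\xi$ is precisely the bookkeeping the paper's lemma carries through its intermediate contexts, so there is no substantive difference.
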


\subparagraph*{Simulating the membership for \hra in \lama}

This section presents how to solve the membership problem for \hra using \lama.
The difference in expressiveness between \hra and \lama comes from the ability of \hra of removing letters from histories  when they are read.
We resort to an encoding of words where each letter is duplicated and annotated with a number representing how many occurrences of that letter have been encountered so far. 
In detail, the first copy of the letter keeps the information on the number of occurrences of the letter seen so far and the second one the number of occurrences including  the present one. 
\begin{example}
    Take $w =    a   b   a   c  a $ then the encoded-word  is
    $w' = a^0 a^1 ~ b^0 b^1 ~ a^1 a^2 ~c^0 c^1 ~a^2 a^3$
\end{example}

The idea behind the encoding of observable transitions is to use the first copy to check the presence and absence 
of the letter in every variable (simulating the role of  $H_r$) while the second one (that is always fresh) can be used to simulate writing and removal (hence simulating $H_w$).
More precisely, once we add an annotated letter  to a variable, the encoded automaton will ensure that the variable always stores the last seen occurrence of that letter. Thus, removing a letter from a history consists  in not  storing the last seen occurrence of the letter in the corresponding encoded variable.
Clearly,  all the  letters annotated with a number smaller than the current one will not be used in any of the transitions,   representing a form of garbage.

We consider a renaming function $\zeta_i : \alphab \rightarrow \alphab^2$ which replaces $u$ by a pair of letters  $u^{i-1} u^{i}$ for any $i\in \mathbb{N^+}$. Then, we define  the encoding of words $\zeta : \alphab^* \rightarrow \alphab^{*}$ as follows
$\zeta(u_1\ldots u_m)=\zeta_{i_{u_1}}(u_1)\ldots\zeta_{i_{u_m}}(u_m)$ 
where each $i_{u_j}$ is the  number of occurrences of $u_j$ seen
so far. Notice that when considering the word up to letter $u_i$, $\proj{\zeta_{i_{u_i}}(u_i)}{2}$ is always a new letter (e.g., a fresh letter with respect to those in $\zeta(u_1\ldots u_i)$).

\begin{definition}[Encoding of an \hra]\label{def:enco_hra_lama}
Let $A = (Q, q_0, F, \setobs \cup \setsilent, \{h_1,\dots h_n\}, M_0)$ be an \hra, 
its encoding into an $n$-\lama is $\enco{A}_\zeta = (Q', q_0', F', \Delta', V', n, M_0')$ where:
\begin{itemize}
 \item $Q'=Q\cup Q_o$    and the set of states $Q_o = \{q_\delta \mid q\in Q, \delta = (q, H_r, H_w,q') \in \Delta\}$ is used by the sequence of transitions simulating each observable transition of $A$;
\item $q_0'=q_0$ and $F'=F$;
 \item $V'= \{h^l,\omega^l \mid l \in [1,n]\}$ and for each layer $l\in [1,n]$, $h^l$ plays the role of history $h_l$ and  $\omega^l$ is used to check the absence of  letters in $h^l$.
 \item $M_0'(h^l)= \{ \proj{\zeta_1(u)}{1} \mid u \in M_0(h_l) \}$ and  $M_0'(\omega^l) = \emptyset$ for all $l\in [1,n]$ meaning that $M_0'$ is as  $M_0$ with all letters renamed with $\zeta_1$ and empty for all extra variables;  
 \item $\Delta' = \setsilent' \cup \setobs' $  with
    \begin{itemize}
        \item $\setsilent' = \{(q, \{h^l\mid h_l \in H_\emptyset\},q') \mid (q,H_\emptyset,q') \in \setsilent\}$, is the direct translation of the $\varepsilon$-transitions in $A$.
    
       \item $\setobs' = \{ (q, \alpha_{H_r}, q_\delta), (q_\delta, \alpha_{H_w} ,q') \mid \delta = (q,H_r,H_w,q') \in \setobs \}$ with for all $l\in [1,n]$
       \[  
       \begin{array}{lcr}
       \alpha_{H_r}(l)=\left\{ \begin{array}{ll}
            (h^l,\reads)  & \text{if } h_l \in H_r \\
            (\omega^l,\writes) & \text{if } h_l \not\in H_r
             \end{array} \right.
    &
    \mbox{ and }
    &
    \alpha_{H_w}(l) =\left\{ \begin{array}{ll} 
            (h^l,\writes)  & \text{if } h_l \in H_w \\
            \sharp & \text{if } h_l \not\in H_w
        \end{array} \right.
       \end{array}
       \]
       the first simulating the guard part of the observable transition and the second the writing/relocation.
\drop{
$$\alpha_{H_r}(l)=\left\{ \begin{array}{ll}
            (h^l,\reads)  & \text{if } h_l \in H_r \\
            (\omega^l,\writes) & \text{if } h_l \not\in H_r
             \end{array} \right.$$ 
simulating the guard part of the observable transition and
$$\alpha_{H_w}(l) =\left\{ \begin{array}{ll} 
            (h^l,\writes)  & \text{if } h_l \in H_w \\
            \sharp & \text{if } h_l \not\in H_w
        \end{array} \right. $$ 
simulating the writing/relocation.  
}
    \end{itemize}
\end{itemize}

\end{definition}

\begin{figure} 
    \centering
    \begin{tikzpicture}[scale=1.5]
\node [draw, circle] (qow1) at (0,0) {$q_{ow}$};
\draw [->, thick](-0.5,0) --(qow1);
\node [draw, circle] (qow2) at (1,1) {$q'_{ow}$};
\node [draw, circle] (qow3) at (1.5,1.5) {$q''_{ow}$};
\node [draw, circle] (qew1) at (2,0) {$q_{ew}$};
\node [draw, circle] (qew2) at (1,-1) {$q'_{ew}$};
\node [draw, circle] (qew3) at (3.5,-1.5) {$q''_{ew}$};
\node [draw, circle, double] (qer1) at (3,0) {$q_{er}$};
\node [draw, circle] (qer2) at (4,1) {$q'_{er}$};
\node [draw, circle, double] (qor1) at (5,0) {$q_{or}$};
\node [draw, circle] (qor2) at (4,-1) {$q'_{or}$};

\draw [->, thick] (qow1) to node[above, rotate = 45] {$(\omega^1, \writes)$} node[below, rotate = 45] {$(\omega^2, \writes)$} (qow2);
\draw [->, thick] (qow2) to node[above, rotate = -45] {$(O^1, \writes)$} (qew1);
\draw [->, thick] (qew1) to node[above, rotate = 45] {$(\omega^1, \writes)$} node[below, rotate = 45] {$(\omega^2, \writes)$} (qew2);
\draw [->, thick] (qew2) to node[above, rotate = -45] {$(E^2, \writes)$} (qow1);
\draw [->, thick] (qow1) to[out = 90, in = 180] node[above, rotate = 45] {$\begin{array}{l}(\omega^1, \writes)\\(E^2, \reads) \end{array}$} (qow3);
\draw [->, thick] (qow3) to node[above, rotate = -45] {$(O^1, \writes)$} (qer1);
\draw [->, thick] (qew1) to node[above, rotate = -45] {$(O^1, \reads)$} node[below, rotate = -45] {$(\omega^2, \writes)$} (qew3);
\draw [->, thick] (qew3) to [out = 0, in = -90] node[below, rotate = 45] {$(E^2, \writes)$} (qor1);
\draw [->, thick] (qer1) to node[above, rotate = 45] {$(O^1, \reads)$} node[below, rotate = 45] {$(\omega^2, \writes)$} (qer2);
\draw [->, thick] (qer2) to node[above, rotate = -45] {$(E^2, \writes)$} (qor1);
\draw [->, thick] (qor1) to node[above, rotate = 45] {$(\omega^1, \writes)$} node[below, rotate = 45] {$(E^2, \reads)$} (qor2);
\draw [->, thick] (qor2) to node[above, rotate = -45] {$(O^1, \writes)$} (qer1);

\end{tikzpicture}
    \caption{The 2-\lama $\enco{A_h}_\zeta$, encoding of the HRA $A_h$ from Fig. ~\ref{fig:HRA_example}. }
    \label{fig:Member_HRA_Lama}
\end{figure}

\begin{example} 
Fig. ~\ref{fig:Member_HRA_Lama} depicts the encoding applied to the \hra of Example~\ref{ex:hra}. Given the word $w = abcabba$ and its renaming $\zeta(w) = a^0 a^1 b^0 b^1 c^0 c^1 a^1 a^2 b^1 b^2 b^2 b^3 a^2 a^3$, we present how the transitions of $\enco{A}_\zeta$ encode the ones in $A$.
$A$ has two histories:  $H = \{O,E\}$, thus the set of variables $V$ of  $\enco{A}_\zeta$ is  $\{O^1, \omega^1\}$ on layer $1$ and $\{E^2, \omega^2\}$ on layer $2$.
Let $(q_{ow}, M_\emptyset)$ be the initial state for both automata, with $M_\emptyset$ the memory context where all variables/histories are empty.

When reading the first letter $a$ in $A_h$, only transition $(q_{ow}, M_\emptyset) \xrightarrow[A_h]{a} (q_{ew}, M_1)$ can be enabled as $a$ is not stored in the histories in $M_\emptyset$, as a consequence, $a$ is added to $O$ in $M_1$. The transition from $q_{ow}$ to $q_{er}$ cannot be enable as $a$ is not store in $E$.
In $\enco{A}_\zeta$, this transition is encoded with the  sequence $(q_{ow},M_\emptyset) \xrightarrow[\enco{A_h}_\zeta]{a^0} (q_{ow}',M') \xrightarrow[\enco{A_h}_\zeta]{a^1} (q_{ew},M'_1)$. The first transition when reading $a^0$, checks if $a^0$ is absent from both $O^1$ and $E^2$ using  $\omega^1$ and $\omega^2$ with the injectivity constraint. When reading $a^1$ the transition $q_{ow}'\rightarrow q_{ew}$ writes the letter in $O^1$. Thus, in $M_1'$,  $a^1$ belongs to $O^1$, as $a^1$ is the last occurrence of $a$. Note that $a^0$ is still stored in $\omega^1$ and $\omega^2$, but it will never be read again (as the renaming $\zeta$ always increases the index of letters).

Similarly, when $A_h$ read the first occurrence of $b$, the only transition enable is $(q_{ew},M_1) \xrightarrow[A_h]{b} (q_{ow}, M_2)$, where $b$ is stored in $E$ is $M_2$. And when reading $c$ the only transition enabled is $(q_{ow},M_2) \xrightarrow[A_h]{c} (q_{ew}, M_3)$ with $O$ storing both $a$ and $c$ while $E$ only stores $b$.
In $\enco{A}_\zeta$, this sequence of transitions is encoded by enabling the sequence of transitions $(q_{ew}, M_1')\xrightarrow[\enco{A_h}_\zeta]{b^0}(q_{ew}', M_1'')\xrightarrow[\enco{A_h}_\zeta]{b^1}(q_{ow}, M'_2)\xrightarrow[\enco{A_h}_\zeta]{c^0}(q_{ow}', M''_2)\xrightarrow[\enco{A_h}_\zeta]{c^1}(q_{ew}, M'_3)$. With $M'_2$ storing $b^1$ in $E^2$ and $M'_3$ storing $c^1$ in $O^3$ in addition to $a^1$. This is the only sequence of transition that can be enabled as $b_0$ was not stored in $O^1$ in the state $(q_{ew}, M'_1)$ and $c^0$ was not stored in $E^2$ in $(q_{ow}, M'_2)$.

When reading the second occurrence of $a$ in $A_h$, the only enabled transition is $(q_{ew}, M_3) \xrightarrow[A_h]{a} (q_{or}, M_4)$ where $q$ was transferred from $O$ to $E$ in $M_4$. In $\enco{A_h}_\zeta$ this is encoded by sequence of transition $(q_{ew}, M'_3)\xrightarrow[\enco{A_h}_\zeta]{a^1} (q_{ew}'',M_3'')  \xrightarrow[\enco{A_h}_\zeta]{a^2}(q_{or},M_4')$. The first transition of this sequence is the only enabled one in configuration $(q_{ew}, M'_3)$ as $a^1$ is already stored in $O^1$, thus it would be impossible to write it in $\omega^1$ to enable the transition to $q_{ew}'$. In $M_4'$, the letter $a^2$ is stored in $E^2$ along with $b^1$, while $a^1$ is still stored in $O^1$ but will never be read again in $\zeta(w)$, so it can be ignored. This is how the transfer mechanism is encoded in this construction.

Reading $bb$, the last two letters of $w$, will enable in $A$ the sequence $(q_{or}, M_4) \xrightarrow[A]{b}(q_{er},M_5)$ transferring $b$ from $E$ to $O$ in $M_5$ and then enabling $(q_{er}, M_5) \xrightarrow[A]{b}(q_{or},M_6)$ transferring $b$ back from $O$ to $E$ in $M_6$.
In $\enco{A_h}_\zeta$, this is encoded by reading the letters $b^1b^2b^2b^3$ and enabling the loop of transition between states $q_{or}$, $q'_{or}$, $q_{er}$ and $q'_{er}$. Looking if the previous occurrence of $b$, here $b^2$ (resp. $b^3$), is stored in $E^2$ (resp. $O^1$) by reading in the variable. Also checking if it is absent from $O^1$ (resp. $E^2$) by writing in the $\omega$ of the same layer. Then writing the next occurrence of $b$, here $b^3$ (resp. $b^4$), in $O^1$ (resp. $E^2$) to encode its transfer.  

\end{example}

The proof that the encoding is correct can be found in Appendix \ref{app:simul-member-hra-lama}. Notice that, as before, the language recognised by $\enco{A}_\zeta$ is actually larger than $\lang{A}$.

\begin{remark} 
In \cite{GT2016} the HRA are presented with a set of registers, able to store only one letter at a time. Its content is overwritten whenever a letter is written into it. The author proved that HRA using only histories are as expressive as the ones using both histories and register. However, the construction presented to remove register is exponential in their number. 
This is caused by the necessity to erase the content of histories, to encode the overwriting, before reading their content to verify if an observable transition is enabled.
The construction of \ref{def:enco_hra_lama} split the enabling check and writing of the letter into two observable transition. It can be extended to register by adding an intermediate non-observable transition erasing the content of variables modeling registers to simulate overwriting. 
\end{remark}

\begin{restatable}{theorem}{hratonu}
    Let $A$ be an \hra.
$w \in \lang{A}$ if and only if $\zeta(w) \in \lang{\enco{A}_\zeta}$.
\end{restatable}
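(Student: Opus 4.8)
The plan is to prove the two implications separately by inductions that track a single invariant relating an \hra configuration reached after reading a prefix $w' = u_1\ldots u_k$ of $w$ to the corresponding $n$-\lama configuration reached after reading $\zeta(w')$. Writing $i_u$ for the number of occurrences of $u$ in $w'$, the invariant I would carry is: $(q_0,M_0)\xRightarrow[A]{w'}(q,M)$ in $A$ if and only if $(q_0,M_0')\xRightarrow[\enco{A}_\zeta]{\zeta(w')}(q,\widehat M)$ in $\enco{A}_\zeta$, where $\widehat M$ is determined by $M$ as follows: for each history $h_l$ and each letter $u\in M(h_l)$ we have $\proj{\zeta_{i_u}(u)}{1}=u^{i_u-1}\in\widehat M(h^l)$, the variables $\widehat M(h^l)$ contain nothing else except such ``last-occurrence'' tags, and each $\widehat M(\omega^l)$ contains only garbage tags $u^{j}$ with $j<i_u$ (which, crucially, will never be read or written again along $\zeta$ of any continuation of $w'$). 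The injectivity constraint on layer $l$ is maintained because the tags stored in $h^l$ are the unique last occurrences and the tags in $\omega^l$ are strictly older; I would check this explicitly as part of the invariant.

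For the forward direction, I would proceed by induction on $|w'|$. The base case is immediate from the definition of $M_0'$. For the inductive step, a non-observable transition $(q,H_\emptyset,q')\in\setsilent$ is matched directly by its translate in $\setsilent'$, and the invariant is trivially preserved since resetting $h^l$ for $h_l\in H_\emptyset$ mirrors resetting $h_l$. For an observable transition $(q,H_r,H_w,q')$ reading $u$, the key point is that $\zeta$ expands $u$ into the pair $u^{i_u-1}u^{i_u}$ where (by the position of this occurrence) $i_u$ is exactly one more than the occurrence count in the prefix before this letter, so $u^{i_u-1}$ is precisely the tag recorded in $\widehat M$ for each history containing $u$, and $u^{i_u}$ is globally fresh. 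I then simulate in two steps: the first \lama transition with labelling $\alpha_{H_r}$ reads $u^{i_u-1}$ from exactly the variables $h^l$ with $h_l\in H_r$ (using the enabling condition $u\in M(h_r)\Leftrightarrow h_r\in H_r$ from the \hra semantics together with the invariant), and writes $u^{i_u-1}$ into $\omega^l$ for $h_l\notin H_r$ — this write succeeds precisely because $u^{i_u-1}\notin\widehat M(h^l)$ for those layers, which is exactly the \hra guard $u\notin M(h_l)$; the second transition with labelling $\alpha_{H_w}$ writes the fresh tag $u^{i_u}$ into every $h^l$ with $h_l\in H_w$ and leaves the other layers untouched. One then verifies that the resulting memory context is exactly the $\widehat M$ prescribed by the invariant for the new \hra context $M'$: the old tag $u^{i_u-1}$ has moved into the $\omega$-variables (garbage), the new tag $u^{i_u}$ sits in the $h$-variables of $H_w$, and letters forgotten by the \hra (histories not in $H_w$) simply have no tag for occurrence $i_u$ recorded, matching $M'(h)=M(h)\setminus\{u\}$.

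For the converse direction, I would again induct on the length of the run in $\enco{A}_\zeta$, but here I must argue that $\enco{A}_\zeta$ has \emph{no unintended runs on an input of the form $\zeta(w')$}, i.e. any accepting run of $\enco{A}_\zeta$ on $\zeta(w')$ must factor through the intermediate states $q_\delta$ in exactly the two-step blocks above, and the only way a block $(q,\alpha_{H_r},q_\delta),(q_\delta,\alpha_{H_w},q')$ can fire reading $u^{i_u-1}u^{i_u}$ is the one dictated by the invariant. This is where the role of the $\omega$ variables and the structure of $\zeta$ (every second tag is globally fresh, tags of the same letter strictly increase) is used: a read on $h^l$ of $u^{i_u-1}$ forces $h_l\in H_r$ by the invariant, a write of $u^{i_u-1}$ into $\omega^l$ forces (by injectivity on layer $l$) that $u^{i_u-1}\notin\widehat M(h^l)$, hence $h_l\notin H_r$, and since the first tag $u^{i_u-1}$ is never fresh (some earlier occurrence exists, or it is the occurrence count $0$ only when $i_u=1$ — in which case it is fresh and both reading and $\omega$-writing behave consistently), the block is pinned down. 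One then reads off a matching \hra transition and closes the induction.

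\textbf{Main obstacle.} The delicate part is the converse direction: ruling out spurious runs and, in particular, handling the first occurrence of a letter ($i_u=1$, tag $u^0$), where $u^0$ is globally fresh so it can equally be read-absent (written into $\omega^l$) on every layer, matching the \hra case $H_r=\emptyset$; I need to check that the two-step block then correctly forces a fresh-read \hra transition. I also must make precise and preserve throughout the induction the ``garbage never interferes'' claim — that a tag $u^j$ with $j<i_u$ sitting in some $\omega^l$ or leftover in some $h^l$ can never match a future input letter of $\zeta$ (true because $\zeta$ only ever emits tags $u^{i}$ with $i\ge i_u$ for later occurrences) — since the whole correspondence collapses if stale tags could re-enable transitions. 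Everything else is a routine unwinding of the two semantics.
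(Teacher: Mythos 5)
Your overall strategy coincides with the paper's: both directions are proved by induction on the length of the derivation, via an invariant tying the \hra memory context $M$ reached after a prefix $w'$ to the \lama memory context reached after $\zeta(w')$ through the ``last-occurrence tag'' of each letter, with a case analysis on non-observable versus observable transitions, and with the two-step block $(q,\alpha_{H_r},q_\delta),(q_\delta,\alpha_{H_w},q')$ pinned down exactly as you describe. The paper packages this invariant as a notion of a memory context being \emph{well-formed with respect to $w'$ and $M$} and proves the iff as a single lemma (Lemma~\ref{lem:encodingHRAtoLAMA}), from which the theorem is immediate.

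There is, however, one concrete flaw in your invariant as stated: you require that each $\widehat M(h^l)$ ``contain nothing else except last-occurrence tags'' of the letters currently in $M(h_l)$, and that $\widehat M$ be \emph{determined} by $M$. This is not preserved by the induction. When the \hra removes or transfers a letter $u$ out of $h_l$ (i.e., $h_l\in H_r$ but $h_l\notin H_w$), the \lama cannot erase the old tag from $h^l$; it merely refrains from writing the new one. So $h^l$ accumulates stale tags $u^j$ with $j$ strictly below the current occurrence count, exactly as $\omega^l$ does, and the \lama context depends on the run taken, not only on $M$. The correct invariant --- the one the paper uses --- is purely relational and asserts only an iff on the \emph{last-occurrence} tag: $u\in M(h_l)$ iff the second projection of the renaming of the last occurrence of $u$ lies in $M'(h^l)$ (with the $\zeta_1$ convention for letters of $M_0$ not occurring in $w'$), and says nothing about older tags. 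Your own ``main obstacle'' paragraph already concedes tags ``leftover in some $h^l$'', so you have the right picture; you just need to weaken the stated invariant accordingly, and fix the off-by-one in the formula (the tag stored for a letter seen $k$ times so far is $u^{k}=\proj{\zeta_{k}(u)}{2}$, equivalently the first component of the \emph{next} pair, not $\proj{\zeta_{k}(u)}{1}$). With that repair your plan goes through and is essentially the paper's proof.
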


\subparagraph*{Complexity}
The two previous encodings give polynomial reductions of the membership problem from \hra to \lama and from \lama to \nuauto. Therefore, there is a polynomial reduction of the problem for \hra to \nuauto.
The expressiveness results from \cite{BertrandKP22} give a linear construction from \nuauto to \lama and an exponential construction, in the number of layers, from \lama to \hra.
As \nuauto are 1-\lama, the same construction can be used to translate a \nuauto into an HRA of polynomial size.
This implies an equivalence of complexity class of the membership problem for \nuauto and \hra, as well as for \nuauto and \lama. By transitivity we get the same equivalence between \lama and \hra.
Next, we show that the membership problem for \lama is NP-complete. For the hardness part, this is shown by resorting to a reduction from the 3SAT problem, while the completeness part follows by observing what would be the cost of executing a word on an automaton. Fig.  \ref{fig:gadgetsmember} depicts the intuition behind the encoding of a 3SAT instance. The idea is that the gadget on Fig.  \ref{fig:init-xi} chooses non-deterministically the truth assignment of $X_i$ or $\overline{X_i}$ and the one in Fig.  \ref{fig:clause-3SAT} checks that this assignment indeed satisfies the given clauses. The full proof is in Appendix \ref{app:complexity-member-lama}.

\begin{figure}
    \centering
\begin{subfigure}{0.35\textwidth}
\centering
\begin{tikzpicture}[>=stealth',shorten >=1pt,auto,node distance=2.5cm, xscale=2, yscale=.9]

\node[state] (q0)      {$q_0$};
\draw [->](-0.5,0)--(q0);
\node[state, accepting, right of=q0] (qf)      {$q_f$};

\path[->] (q0)  edge [bend right,below]  node {$(X_i, \writes)$} (qf);
\path[->] (q0)  edge [bend left,above]  node {$(\overline{X_i},\writes)$} (qf);

\end{tikzpicture}
    \caption{The gadget for the existentially quantified variable $x_i$}
    \label{fig:init-xi}
\end{subfigure} \hspace*{0.55cm}
\begin{subfigure}{0.5\textwidth}
\centering
\begin{tikzpicture}[xscale=1.3, yscale=.8]
\node [draw, circle] (q0) at (6,0) {$q_0$};
\draw [->](5.5,0)--(q0);
\node [draw, circle, double] (qf) at (10,0) {$q_f$};
\draw [->, thick, rounded corners] (q0)--(6.5,1)--node[above]{$(L_j^1, \reads)$}(9.5,1)--(qf); 
\draw [->, thick] (q0)--node[above]{$(L_j^2,\reads)$}(qf);
\draw [->, thick, rounded corners] (q0)--(6.5,-1)--node[above]{$(L_j^3, \reads)$}(9.5,-1)--(qf);
\end{tikzpicture} 
\caption{The 3SAT clause gadget for $C_j=(L_j^1\vee L_j^2\vee L_j^3)$}
    \label{fig:clause-3SAT}
\end{subfigure} 
    \caption{Gadgets used for showing NP-hardness of the membership problem.}
    \label{fig:gadgetsmember}
\end{figure}
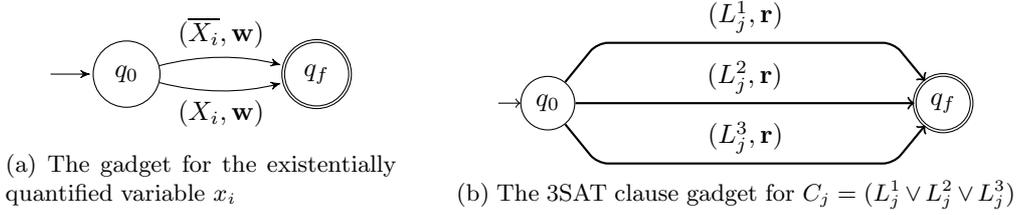

\begin{restatable}{theorem}{lamanp} \label{the:membership-npcomplete}
The membership problem for \lama  is NP-complete. 
\end{restatable}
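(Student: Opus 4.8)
The plan is to establish the two directions separately: membership in NP, and NP-hardness via 3SAT.

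For the \emph{upper bound}, I would argue that membership can be decided by a nondeterministic polynomial-time algorithm that guesses an accepting run and checks it. The subtle point with unbounded-memory automata is that a run on a word $w = u_1 \cdots u_m$ may interleave $\varepsilon$-transitions between consecutive observable transitions, so a priori the run could be arbitrarily long. The key observation is that a bounded number of $\varepsilon$-transitions suffices between two consecutive letters: each $\varepsilon$-transition only resets variables, so after visiting a repeated state with an identical memory context along an $\varepsilon$-path nothing is gained; more precisely, between reading $u_i$ and $u_{i+1}$ one never needs more than $|Q|$ consecutive $\varepsilon$-transitions (and in fact the relevant memory content can be tracked by recording, for each variable, whether the current letter under consideration and the finitely many letters of $w$ and $M_0$ are present). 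Hence an accepting run has length polynomial in $|A|$ and $|w|$. The algorithm guesses this run transition by transition, maintaining at each step only the information of which of the finitely many ``relevant'' letters — those occurring in $w$ or in $M_0$ — lie in each variable of each layer; this is polynomial-size state, and each transition's guard (presence, freshness for a layer, injectivity) is checkable in polynomial time. This puts the problem in NP; by the reductions established earlier in this section, the same bound transfers among all three models, so it suffices to argue it for \lama (or even just observe it directly for \lama).

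For the \emph{lower bound}, I would reduce 3SAT to the membership problem for \lama, using the gadgets of Fig.~\ref{fig:gadgetsmember}. Given a 3CNF formula $\varphi$ over variables $X_1,\dots,X_k$ with clauses $C_1,\dots,C_r$, the input word is chosen so that reading it forces a choice of truth values followed by a verification of all clauses. Concretely, the first part of the word supplies, for each variable $X_i$, a pair of fresh letters (one standing for $X_i$ true, one for $\overline{X_i}$ true); the variable-selection gadget of Fig.~\ref{fig:init-xi} nondeterministically writes exactly one of them into a variable $X_i$ (or $\overline{X_i}$) of a dedicated layer, committing to an assignment. Because of the injectivity constraint on a layer and the freshness requirement of $\writes$, the automaton cannot store both literals of $X_i$ on the same layer, so the commitment is genuine. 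The second part of the word then, for each clause $C_j = (L_j^1 \vee L_j^2 \vee L_j^3)$, replays the three letters corresponding to its literals; the clause gadget of Fig.~\ref{fig:clause-3SAT} passes from $q_0$ to $q_f$ iff at least one of those letters was actually stored — i.e. iff the chosen assignment satisfies $C_j$. Stringing the $k$ selection gadgets and the $r$ clause gadgets in series (via any-letter/$\varepsilon$ glue), the word is accepted iff $\varphi$ is satisfiable. The construction is clearly polynomial, and correctness is a straightforward induction matching runs to assignments.

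The main obstacle I anticipate is the upper-bound argument, specifically pinning down a clean polynomial bound on the length of runs in the presence of $\varepsilon$-transitions and making precise the claim that it suffices to track only the finitely many ``relevant'' letters of $w$ and $M_0$ in each variable — one must check that no transition guard or write can be affected by the (unboundedly many) irrelevant letters, that freshness and injectivity remain correctly decidable from this abstracted information, and that this abstraction is sound and complete for membership. The 3SAT reduction, by contrast, is routine once the gadget semantics are spelled out; the only care needed there is to ensure the letters used across different gadgets are distinct where required and that the $\varepsilon$/any-letter glue between gadgets does not interfere with the committed assignment.
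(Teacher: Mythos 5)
Your proposal follows essentially the same route as the paper on both sides: NP-hardness by chaining the variable-selection and clause gadgets of Fig.~\ref{fig:gadgetsmember} for a 3SAT instance, and NP-membership by guessing an accepting run and bounding the number of $\varepsilon$-transitions between consecutive letters. Two details need repair, however. First, your input word does not match the gadgets you invoke: each selection gadget and each clause gadget consumes exactly \emph{one} letter, so supplying a pair of fresh letters per propositional variable and replaying three letters per clause would desynchronise the run. The paper's word is simply $(1)^{n}(1)^{m}$, the same letter throughout; the separation between propositional variables is achieved not by distinct letters but by placing $X_i$ and $\overline{X_i}$ on their own layer $i$, so that writing $1$ commits the assignment (freshness is per layer) and the clause gadget's single read of $1$ from one of $L_j^1,L_j^2,L_j^3$ tests exactly whether that literal was selected. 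With distinct letters per variable, a one-letter clause gadget cannot work, since the letter to be presented would have to depend on which literal happens to be satisfied. Second, your bound of $|Q|$ consecutive $\varepsilon$-transitions is too small: an $\varepsilon$-path may usefully reset up to $|V|$ distinct variables, each reachable only after up to $|Q|-1$ intermediate steps, so the correct bound (the one the paper uses) is $O(|V|\cdot|Q|)$ silent steps between two observable transitions --- still polynomial, so your conclusion stands. Finally, the paper's certificate just lists configurations and transitions; since memory contexts only ever contain letters of $M_0$ and $w$, they are already of polynomial size and the ``relevant letters'' abstraction you worry about is not actually needed.
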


Hence we can conclude that:
\begin{corollary}
    The membership problems for \nuauto, \lama and \hra is NP-complete.
\end{corollary}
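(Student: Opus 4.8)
The plan is to establish NP-completeness of the membership problem for \lama\ in the usual two parts: membership in NP, and NP-hardness.

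\textbf{Membership in NP.} First I would argue that if $w \in \lang{A}$ for an $n$-\lama\ $A$ with $|w| = m$, then there is an accepting run whose length is polynomially bounded in $|A|$ and $m$. The key observation is that between two consecutive observable transitions one never needs more than $|Q|$ non-observable ($\varepsilon$-reset) transitions, since a longer stretch would repeat a state and could be shortcut. Hence an accepting run on $w$ has at most $m + (m{+}1)\cdot|Q|$ transitions. Moreover, each configuration along such a run can be represented succinctly: the memory context only ever holds letters appearing in $w$ or in $M_0$, so $M$ is described by recording, for each variable $v^l$, which of those finitely many letters it currently stores — a polynomial-size object. A nondeterministic machine therefore guesses the sequence of transitions (and the reset choices), then verifies in polynomial time that each step respects the semantics (read: letter present in the named variable; write: letter fresh for the layer; injectivity maintained). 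This places the problem in NP. Using the polynomial encoding of \lama\ into \nuauto\ (Theorem~\lamatonu{}) together with the renaming $\xi^n$, or the analogous argument directly, one also gets the statement for \nuauto, and via the polynomial-size encoding \hra\ $\to$ \lama\ $\to$ \nuauto\ (Theorems~\hratonu{} and~\lamatonu{}) the result transfers to all three models; this is exactly what the corollary needs.

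\textbf{NP-hardness via reduction from 3SAT.} Given a 3CNF formula $\varphi = C_1 \wedge \dots \wedge C_r$ over variables $x_1,\dots,x_k$, I would build a 1-\lama\ (i.e.\ a \nuauto, so hardness already holds at the weakest model) $A_\varphi$ and a word $w_\varphi$ such that $w_\varphi \in \lang{A_\varphi}$ iff $\varphi$ is satisfiable. For each variable $x_i$ use two variables $X_i$ and $\overline{X_i}$ of the automaton, and chain together $k$ copies of the gadget in Fig.~\ref{fig:init-xi}: reading the $i$-th letter, the automaton nondeterministically writes it into $X_i$ (choosing $x_i$ true) or into $\overline{X_i}$ (choosing $x_i$ false). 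Because these two automaton-variables share a layer, the injectivity constraint forbids the same letter being written to both — but since each letter is written exactly once this is harmless, and the choice of which variable receives the $i$-th letter is precisely the truth assignment. The input word $w_\varphi$ begins with $k$ fresh letters $a_1,\dots,a_k$ (one per variable, driving the choice phase), followed, for each clause $C_j = (L_j^1 \vee L_j^2 \vee L_j^3)$, by a single repetition of the letter corresponding to one of its literals — more precisely, in the clause phase the automaton reads a letter and, via the gadget of Fig.~\ref{fig:clause-3SAT}, checks it against $L_j^1$, $L_j^2$, or $L_j^3$ (where $L_j^t$ names the automaton-variable $X_i$ or $\overline{X_i}$ associated with that literal), a read-transition succeeding only if that literal was set true in the choice phase. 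The reduction is computable in polynomial time (the automaton and word have size $O(k + r)$), and correctness is straightforward: an accepting run exists iff one can pick, for every clause, a satisfied literal, i.e.\ iff $\varphi$ is satisfiable. Then I would note that $w_\varphi$ must be chosen so that the "check" letters are exactly the $a_i$'s already written (so reads can succeed), which is automatic since the alphabet letters re-appear.

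\textbf{Main obstacle.} The delicate point is the NP upper bound: one must be careful that the succinct configuration representation really is polynomial and that run-length can be bounded — in particular ruling out that useful runs need to cycle through resets many times, and confirming that no letter outside $w$ and $M_0$ is ever needed (writes are always of the currently-read input letter, so this holds). Getting these bounds right, and checking that they survive the model translations without blowing up, is where the real work lies; the hardness construction itself is the routine direction, essentially read off from Fig.~\ref{fig:gadgetsmember}.
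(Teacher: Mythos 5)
Your overall architecture matches the paper's (polynomial interreducibility of the three membership problems, then NP-completeness established once via a 3SAT reduction plus a certificate argument), but your hardness reduction as described does not work. By flattening everything onto one layer and initialising each Boolean variable $x_i$ with its own fresh letter $a_i$, you force the reduction to commit, for each clause $C_j$, to \emph{which} letter appears at the clause-checking position of the word $w_\varphi$. Since $a_{i}$ is only ever stored in $X_{i}$ or $\overline{X_{i}}$, placing (say) the letter of $L_j^1$'s Boolean variable at that position means only the transition $(L_j^1,\reads)$ of the gadget in Fig.~\ref{fig:clause-3SAT} can possibly fire: the nondeterministic choice of \emph{which} literal witnesses the clause is destroyed, and acceptance becomes equivalent to ``there is an assignment making one pre-selected literal per clause true,'' which is not equivalent to satisfiability (e.g.\ $(x_1)\wedge(\overline{x_1}\vee x_2)$ with the pre-selected literals $x_1$ and $\overline{x_1}$). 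The paper's construction avoids exactly this by using a single repeated letter ($w=1^{n+m}$) for every position and by putting the pair $X_i,\overline{X_i}$ on its \emph{own} layer $i$, so that injectivity never blocks writing the same letter $1$ across different layers while the clause gadget can nondeterministically read $1$ from whichever literal-variable happens to hold it. In other words, the layers are not an incidental feature to be optimised away: they are what makes the reduction to \lama{} membership go through, and NP-hardness for \nuauto{} and \hra{} is then obtained from the polynomial reductions between the membership problems, not from a direct one-layer construction.

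A secondary problem is your run-length bound for the NP upper bound. The claim that a stretch of more than $|Q|$ consecutive $\varepsilon$-transitions can be shortcut because a state repeats is unsound: excising the repeated segment removes the resets it performs, and leaving a variable non-empty can \emph{block} a later $\writes$ transition (freshness for the layer would fail). The paper instead bounds the number of $\varepsilon$-transitions between two observable steps by $O(|V|\cdot|Q|)$, arguing that no variable ever needs to be reset twice in such a stretch and that reaching the transition resetting a given variable takes at most $|Q|-1$ steps. Your conclusion (polynomially bounded certificates, hence NP membership, transferred to the other two models via the polynomial encodings) is correct, but it needs this corrected bound.
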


As a direct consequence and looking at the expressiveness hierarchy in Fig.  \ref{fig:graphe_expre} we can also give a complexity class for the membership problem in FRA. Indeed, since FMA can be encoded into FRA \cite{Tzevelekos11}, we can deduce NP-hardness, and completeness follows from their encoding into \lama \cite{BertrandKP22}.

\begin{corollary}
The membership problem for FRA  is NP-complete.
\end{corollary}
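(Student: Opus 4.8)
The plan is to prove both NP-membership and NP-hardness for the membership problem of $n$-\lama, given a \lama $A$ and a word $w = u_1 \dots u_m$.

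\textbf{Membership in NP.} First I would argue that a run witnessing $w \in \lang{A}$ can be taken to be short. An accepting computation alternates between $m$ observable transitions (one per input letter) and blocks of $\varepsilon$-transitions. A block of $\varepsilon$-transitions strictly between two consecutive observable moves can be shortened: since each $\varepsilon$-transition only resets a set of variables, we never need to visit the same state twice inside such a block, so each block has length at most $|Q|$. Hence an accepting run uses at most $m(|Q|+1)+|Q|$ transitions. A nondeterministic machine guesses this sequence of transitions. The remaining point is that we do not need to store the (potentially large) concrete memory contexts: what matters for checking enabledness of each observable transition $(q,\alpha,q')$ at step $i$ is, for each layer $l$ with $\alpha(l)=(v^l,\reads)$, whether $u_i \in M(v^l)$, and for $\alpha(l)=(v^l,\writes)$ whether $u_i$ is fresh for layer $l$. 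These predicates depend only on which earlier letter positions equal $u_i$ and in which variable the last such occurrence currently resides, information that can be maintained incrementally in polynomial space along the guessed run (tracking, for each position $j$ and each layer $l$, the current "owner" variable of $u_j$, or $\bot$ after a reset). So the whole verification is polynomial time, giving NP.

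\textbf{NP-hardness.} I would reduce from 3SAT. Given a formula $\varphi$ over variables $x_1,\dots,x_k$ with clauses $C_1,\dots,C_r$, build a single-layer-sufficing (or few-layer) \lama $A_\varphi$ and a fixed word $w_\varphi$ as follows. Use variables $X_1,\overline{X_1},\dots,X_k,\overline{X_k}$ of the automaton. The word $w_\varphi$ feeds, in order, one fresh letter $a_i$ for each propositional variable $x_i$, followed by letters encoding the clauses. For the variable-choice phase, the gadget of Fig.~\ref{fig:init-xi} reads $a_i$ and nondeterministically writes it into either $X_i$ (meaning $x_i$ is true) or $\overline{X_i}$ (meaning $x_i$ is false); the injectivity constraint on the layer guarantees $a_i$ cannot end up in both, so the choice is a genuine total assignment. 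For the clause phase, for each clause $C_j = (L_j^1 \vee L_j^2 \vee L_j^3)$ the gadget of Fig.~\ref{fig:clause-3SAT} reads the letter associated with clause $j$ and nondeterministically takes one of three branches, the branch for literal $L_j^t$ performing a read test $(L_j^t,\reads)$; this read succeeds exactly when the earlier choice phase stored that letter in the variable named by the literal, i.e.\ when the chosen assignment makes $L_j^t$ true. The subtlety is arranging the input letters so that the read test in the clause phase actually queries the same letter $a_i$ that was classified in the choice phase: one clean way is to let the clause-$j$ block of $w_\varphi$ repeat the letters $a_{i}$ of the variables occurring in $C_j$, and have the literal branch read $a_i$ from $X_i$ or $\overline{X_i}$ accordingly; with one layer this needs care because of injectivity, so I would either use $k$ layers (one per variable, so reads never conflict) or re-feed and reset appropriately. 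The automaton reaches a final state on $w_\varphi$ iff some branch of every clause gadget succeeds, i.e.\ iff the guessed assignment satisfies $\varphi$. The construction is clearly polynomial, completing the reduction.

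\textbf{Expected main obstacle.} The NP upper bound is the routine direction once the "short run + incremental owner-tracking" idea is in place. The delicate part of the hardness construction is making the letters consistent across the two phases while respecting the per-layer injectivity constraint of \lama, since a single layer forbids a letter from sitting in two variables at once and an observable transition examines at most one variable per layer. I expect to spend most of the effort there, choosing the number of layers and the exact shape of $w_\varphi$ so that the choice made in the $x_i$-gadget is faithfully recovered by the read tests in the clause gadgets, and so that no spurious accepting run exists for an unsatisfiable $\varphi$. Finally, the corollary for FRA is immediate: FMA embed into FRA~\cite{Tzevelekos11}, whose membership problem is NP-hard~\cite{SAKAMOTO2000297}, and FRA embed into \lama~\cite{BertrandKP22}, so NP-completeness transfers.
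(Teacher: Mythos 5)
Your final sentence is exactly the paper's argument for this corollary: NP-hardness transfers from FMA membership \cite{SAKAMOTO2000297} via the FMA-to-FRA encoding \cite{Tzevelekos11}, and the NP upper bound via the FRA-to-\lama encoding \cite{BertrandKP22} together with Theorem~\ref{the:membership-npcomplete}. The bulk of your write-up re-proves that prerequisite theorem (NP-completeness of \lama membership), and there too you follow essentially the paper's route — a short-certificate argument for the upper bound and a 3SAT reduction with choice and clause gadgets, where the injectivity wrinkle you flag is resolved in the paper exactly as in one of your proposed fixes, by placing each $X_i,\overline{X_i}$ pair on its own layer (and feeding the single letter $1$ throughout).
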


\section{Complexity of the non-emptiness problem}\label{sec:empty}

The non-emptiness problem consists in  deciding whether the language accepted by  an automaton is non-empty, or in other words checking if there is a path from the initial configuration to a final configuration.
As mentioned before, in \cite{GT2016}, it has been shown that deciding the non-emptiness for \hra is Ackermann-complete. Still, the complexity for non-emptiness is known neither for \lama nor for \nuauto. 
We start with the Non-Emptiness Problem for \nuauto. We show that the problem is PSPACE-complete. To do so, we reduce the TQBF problem  (true fully quantified Boolean formula) to \nuauto non-emptiness. TQBF is known to be  PSPACE-complete (Meyer-Stockmeyer theorem \cite{arora2006computational}).

\begin{restatable}{lemma}{nunep}\label{lem:empty-hard}
    The Non-Emptiness problem is  PSPACE-hard for \nuauto. 
\end{restatable}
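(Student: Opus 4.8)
The plan is to reduce from TQBF. Fix a fully quantified Boolean formula $\Phi = Q_1 x_1\, Q_2 x_2 \cdots Q_n x_n\, \psi$ with $\psi$ in, say, 3-CNF, where each $Q_i \in \{\exists,\forall\}$. I will build in polynomial time a $\nu$-automaton $A_\Phi$ whose language is non-empty iff $\Phi$ is true. The key difficulty is that a $\nu$-automaton has no explicit truth values and no stack, so I must encode a truth assignment using \emph{which letters sit in which variables}, and I must simulate the alternation of quantifiers — in particular the universal case, where \emph{both} choices for $x_i$ must eventually lead to acceptance — using only nondeterminism plus the injectivity/freshness discipline on memory. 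The trick is that a successful run is a single path, so to force ``both branches work'' I make the automaton revisit each universal variable twice within one run: first commit to $x_i = \mathrm{true}$, run the rest of the formula to completion, then come back, \emph{reset} the memory associated to variables $x_i,\dots,x_n$, commit to $x_i=\mathrm{false}$, and run the rest again. Since a $\nu$-automaton can reset a set of variables on an $\varepsilon$-transition, this ``rewind'' is available.

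Concretely, for each propositional variable $x_i$ I use two $\nu$-automaton variables, $v_i$ and $\overline{v_i}$; the invariant is that exactly one of them holds a (single) fresh letter, recording the current truth value of $x_i$, and both are empty before $x_i$ is assigned. The automaton processes the prefix $Q_1 x_1 \cdots Q_n x_n$ left to right via a chain of gadgets. For an existential $x_i$, the gadget is as in Figure~\ref{fig:init-xi}: nondeterministically write a fresh letter into $v_i$ or into $\overline{v_i}$ (using any-letter write transitions). For a universal $x_i$, the gadget first writes a fresh letter into $v_i$, proceeds to the gadgets for $x_{i+1},\dots,x_n$ and the clause-check, and upon reaching acceptance of that sub-run returns (via $\varepsilon$-transitions) to a control state that resets $\{v_j,\overline{v_j} \mid j \ge i\}$, then writes a fresh letter into $\overline{v_i}$ and re-runs the gadgets for $x_{i+1},\dots,x_n$ and the clause-check a second time. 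Nesting these gadgets for $i=1,\dots,n$ yields a circuit whose runs are in bijection with the subtrees of the game tree of $\Phi$ that witness truth; note the automaton has size polynomial in $|\Phi|$ even though the game tree is exponential, because the ``true'' and ``false'' copies of the continuation share the same gadget for $x_{i+1}$ (we just need distinct control states to remember the return point, and a $\mathrm{poly}(n)$-sized stack of return points is encoded in the states since the nesting depth is $n$) — I will need to be a little careful here and in fact encode return addresses by threading through $2n$ nested gadget copies, which is still polynomial. After all quantifier gadgets, the clause-checking gadget for each clause $C_j = (L_j^1 \vee L_j^2 \vee L_j^3)$ is the disjunctive gadget of Figure~\ref{fig:clause-3SAT}: from its entry state, for each literal $L_j^k = x_i$ (resp.\ $\neg x_i$) there is an $\varepsilon$- (or any-letter-read-and-check) branch that succeeds only if $v_i$ (resp.\ $\overline{v_i}$) is currently nonempty, i.e.\ only if the literal is satisfied by the current assignment; these are chained for $j=1,\dots,m$, and only a run that passes all of them reaches the final state.

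The correctness argument is a straightforward induction on $n$ (equivalently on quantifier nesting): I claim that, for every assignment $\sigma$ to $x_1,\dots,x_i$ realized in the memory by the $v$/$\overline{v}$ occupancy pattern, the sub-automaton starting at the gadget for $x_{i+1}$ has an accepting run iff $Q_{i+1}x_{i+1}\cdots Q_n x_n\,\psi[\sigma]$ is true; the base case $i=n$ is exactly the correctness of the clause gadgets (an accepting run exists iff $\psi[\sigma]$ holds), the existential step uses that a nondeterministic choice of $v_{i+1}$ vs.\ $\overline{v_{i+1}}$ corresponds to $\exists$, and the universal step uses that the rewind-and-reset construction forces \emph{both} continuations to accept, matching $\forall$. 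The input word read along an accepting run is irrelevant to the statement (we only care that $\lang{A_\Phi}\neq\emptyset$), so I may freely use any-letter and fresh-write transitions; the only subtlety is that freshness is relative to the whole memory, so before writing the ``value letter'' for $x_i$ I must ensure the previously used value letters for $x_i$ (from an earlier pass of an enclosing universal quantifier) have been reset — which is exactly why the universal gadget resets $\{v_j,\overline{v_j}\mid j\ge i\}$ rather than just $\{v_i,\overline{v_i}\}$. The main obstacle, and the place to be careful in the write-up, is this bookkeeping of the reset scopes and of the polynomially many return-point control states that implement the ``rewind'' without blowing up the automaton; everything else is routine. Combined with the (easy) membership-style argument that $\nu$-automaton non-emptiness is \emph{in} PSPACE, this will give PSPACE-completeness, but the present lemma only asserts the hardness direction.
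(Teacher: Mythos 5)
Your overall strategy---reduce from TQBF, encode the current assignment by which of $v_i$, $\overline{v_i}$ is non-empty, and make a universal quantifier ``rewind'' by resetting the variables of $x_i,\dots,x_n$ and re-running the continuation under the other truth value---is exactly the paper's. The gap is in the one place you yourself flag as needing care: how a run, upon completing the \emph{shared} continuation of a universal gadget, knows whether to loop back for the second truth value or to exit upward. You offer two mechanisms and neither works. Storing ``a $\mathrm{poly}(n)$-sized stack of return points in the states'' needs one control state per possible stack content, i.e. $2^n$ states, since each of up to $n$ enclosing universal gadgets is independently in its first or its second pass. Duplicating the continuation instead gives two copies per universal level, and those copies nest, so the total size is $2^{O(n)}$, not $2n$. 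Without one of these, the exit state of the shared continuation has a genuine nondeterministic choice between ``loop back'' and ``propagate acceptance,'' and a cheating run can exit after verifying only one branch of a $\forall$, so the reduction is unsound.

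The paper's fix is to move this one bit of return-address information per universal quantifier out of the control states and into the memory. Each universal $x_i$ gets a third variable $\widetilde{X_i}$: the first arrival at the continuation's exit must take a transition labelled $(\widetilde{X_i},\writes)$ (setting the flag) and then loop back through the reset-and-flip stage, while the transition to the gadget's accepting exit is guarded by $(\widetilde{X_i},\reads)$ and hence is enabled only once the flag has been set, i.e. only after a completed first pass; the rewind of any enclosing universal $x_j$ resets $\widetilde{X_i}$ for $i>j$, forcing inner gadgets to redo both passes. With this flag, your induction on quantifier depth goes through essentially as you state it. A small secondary point: an $\varepsilon$-transition in a \nuauto{} can only reset variables, so the clause test ``$v_i$ is non-empty'' must be realized as a read transition consuming an input letter, as in Fig.~\ref{fig:clause-3SAT}; your parenthetical read-and-check variant is the right one, the pure-$\varepsilon$ one is not available in the model.
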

\begin{proof}
Let $\nu$NEP be the short for Non-Emptiness Problem for \nuauto. 

We show that TQBF can be reduced to  $\nu$NEP.
Let $Q_1 x_1 \ldots Q_n x_n (C_1 \wedge \ldots \wedge C_m)$, be a fully quantified Boolean formula, where each $Q_i\in \{\forall, \exists \}$ and each $C_j$ is a clause comprising at most $n$ 
literals ($x_i$ or $\overline{x_i}$). 
We assume that literals in clauses are ordered according to the order of variable declarations and at most one literal per variable is present.
To encode TQBF in \nuauto we consider: 
\begin{itemize}
    \item for each existentially quantified $x_i$, variables $X_i$ and $\overline{X_i}$, and the gadget depicted in Fig. ~\ref{fig:init-xi}, used in the proof of Theorem ~\ref{the:membership-npcomplete};
    \item for each universally quantified $x_i$, variables $X_i$, $\overline{X_i}$ and $\widetilde{X_i}$, and the gadget depicted in Fig. ~\ref{fig:univ}. 
    Variable $\widetilde{X_i}$ is used as a flag to indicate that all possible truth assignments of $x_i$ have been considered.
    The initial transition of the gadget initialises variable $\overline{X_i}$ to $1_i$.  
    We do not detail the precise automaton handling other variables and the clauses as it will be constructed recursively (and explained below), connected between states $q_1$ and $q_2$. 
    The looping part starting in state $q_2$ writes letter $2_i$ into variable $\widetilde{X_i}$, which is used after browsing once again the dashed part (corresponding to variables and clauses) to reach the final state $q_f$. 
    After this, variable $\overline{X_i}$ is reset and then variable $X_i$ is initialized to $1_i$ in order to consider the other truth assignment of $x_i$. From state $q_5$ to $q_1$ all the variables for $x_j$, with $j$ from $i+1$ to $n$ are reset to reinitialize their truth assignments; 
    \item for each clause $C_j$, a clause gadget depicted in Fig. ~\ref{fig:clause-TQBF}. It tests  literals one after the other and takes the oblique transition for the first which makes the clause satisfied, which means that  the remaining literals are just read up to the end of the clause, which is satisfied if $q_f$ is reached.
\end{itemize} 

\begin{figure}
\begin{subfigure}{\textwidth}
\centering
\begin{tikzpicture}[>=stealth',shorten >=1pt,auto,node distance=2cm, xscale=2.2, yscale=1]

\node[state] (q0)      {$q_0$};
\node [left of = q0, node distance = 1cm] (start) {};
\draw [->] (start)--(q0);
\node[state] (q1) [right of=q0, very thick]  {$q_1$};
\node[rectangle, rounded corners, draw, dashed, text width=1.8cm, minimum height=1cm, minimum width=1.5cm, align=center,node distance=2cm] (var) [right of=q1] {variables};
\node[rectangle, rounded corners, draw, dashed, text width=1.8cm, minimum height=1cm, minimum width=1.5cm, align=center,node distance=2.5cm] (cla) [right of=var] {clauses};
\node[state] (q2) [right of=cla, very thick]  {$q_2$};
\node[state] (q3) [above of=q2, node distance = 1.7cm]  {$q_3$};
\node[state] (q4) [left of=q3, node distance=3cm]  {$q_4$};
\node[state] (q5) [above of=q1, node distance = 1.7cm]  {$q_5$};
\node[state, accepting] (qf) [right of=q2]  {$q_f$};

\path[->] (q0)  edge  node {$(\overline{X_i},\writes)$} (q1);
\path[->] (q2)[right]  edge  node {$(\widetilde{X_i},\writes)$} (q3);

\path[->] (q4)[above]   edge  node {$(X_i,\writes)$} (q5);
\path[->] (q2)  edge  node {$(\widetilde{X_i},\reads)$} (qf);
\path[->] (q3)[above]  edge  node {$\rset(\overline{X_i})$} (q4);
\path[->] (q5)[right]  edge  node {$\rset(\{X_j,\overline{X_j},\widetilde{X_j}\mid j>i \})$} (q1);

\path[->, dashed] (q1)  edge  (var);
\path[->, dashed] (var)  edge  (cla);
\path[->, dashed] (cla)  edge  (q2);

\end{tikzpicture} 
    \caption{The gadget for universally quantified variable $x_i$}
    \label{fig:univ}
\end{subfigure}
\vspace{0.2mm}

\begin{subfigure}{\textwidth}
\centering
\begin{tikzpicture}[>=stealth',shorten >=1pt,auto,node distance=2.5cm, xscale=2, yscale=.5]

\node[state] (q0)      {$q_0$};
\node [left of = q0, node distance = 1cm] (start) {};
\draw [->] (start)--(q0);
\node[state] (q1p) [right of=q0] {$q_1'$};
\node[state] (q1) [above of=q1p, node distance = 1.6cm] {$q_1$};
\node[state] (q2) [right of=q1]  {$q_2$};
\node[state] (q3) [right of=q2] {$q_3$};
\node (d1) [right of=q3, node distance = 1cm] {$\cdots$};
\node[state,accepting] (q4) [right of=d1, node distance = 1cm] {$q_k$};

\node[state] (q2p) [right of=q1p] {$q_2'$};
\node[state] (q3p) [right of=q2p] {$q_3'$};
\node (d2) [right of=q3p, node distance = 1cm] {$\cdots$};
\node[state] (q4p) [right of=d2, node distance = 1cm] {$q_k'$};

\path[->] (q0)  edge [right, near start] node {$(L_{j_1},\reads)$} (q1);
\path[->] (q1)  edge [bend right,below]  node {$(L_{j_2},\reads)$} (q2);
\path[->] (q1)  edge [bend left,above]  node {$\overline{(L_{j_2}},\reads)$} (q2);
\path[->] (q2)  edge [bend right,below]  node {$(L_{j_3},\reads)$} (q3);
\path[->] (q2)  edge [bend left,above]  node {$(\overline{L_{j_3}},\reads)$} (q3);

\path[->] (q0)  edge  [below] node {$(\overline{L_{j_1}},\reads)$} (q1p);
\path[->] (q1p)  edge [right, near start]  node {$(L_{j_2},\reads)$} (q2);
\path[->] (q1p)  edge [below]  node {$(\overline{L_{j_2}},\reads)$} (q2p);
\path[->] (q2p)  edge [right, near start]  node {$(L_{j_3},\reads)$} (q3);
\path[->] (q2p)  edge [below]  node {$(\overline{L_{j_3}},\reads)$} (q3p);

\end{tikzpicture}
    \caption{The gadget for TQBF clause $C_j=(L_{j_1}\vee \ldots \vee L_{j_k})$.}
    \label{fig:clause-TQBF}
\end{subfigure}

    \caption{Gadgets used for showing NP-hardness of the emptiness problem for \nuauto.}
    \label{fig:gadgets}
\end{figure}
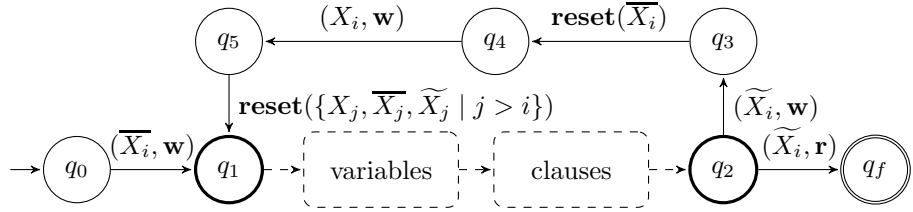
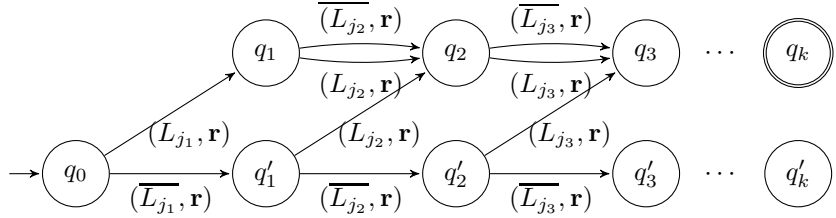

In order to construct the \nuauto $A$ encoding the instance of TQBF we connect 
first the clause gadgets by merging the final state of a clause gadget with the initial state of the next one, let  $C$ be the resulting automaton. Then, we connect to $C$ the gadgets for variable declarations starting from the $n$th, i.e., the last in the order of declarations. If the variable is under an existential quantifier, we connect the existential variable gadget  in front of the automaton obtained so far by merging its final state with the initial state of $C$. If the variable is under a universal quantifier, we connect the corresponding gadget by merging the initial state of $C$ with state $q_1$ of the gadget, and the final state of $C$ with state $q_2$ of the gadget. We connect this way, i.e., following the inverse order of declarations,  the  gadgets for all the remaining variable declarations.
The initial state of the first declared variable gadget is the initial state of $A$ and the unique state $q_f$ of the final construction is the unique final state of $A$. 
Finally, the input word $w$ is obtained recursively for each TQBF instance by the function $input(\phi)$ defined in Algorithm \ref{algo-input}. The construction of the word follows the intuition given above (for the construction of the automaton), that is it unfolds the loops generating the letters needed at each step.

\begin{algorithm}
\begin{algorithmic}[1]

\Function{input}{$\phi$}       \Comment{$\phi= Q_1 x_1 \ldots Q_n x_n \ C_1 \ldots C_m$}
\State $\forall i\in [1,n]: \init(x_i) = 1_i$
\State $\forall i\in [1,n]: \done(x_i) = 2_i$
\State $\forall j\in [1,m]: w_j$ 
    \Comment{contains exactly one $1_i$ for each $x_i$ or $\overline{x_i}$ present in clause $C_j$}

    \If{$\phi=\emptyset$}
        \State \Return  $\epsilon$
    \ElsIf{$\phi=\exists x_i \ \phi'$}
        \State \Return  $\init(x_i).input(\phi')$
        
    \ElsIf{$\phi=\forall x_i \ \phi'$}
        \State \Return  $\init(x_i).input(\phi').\done(x_i).\init(x_i).input(\phi').\done(x_i)$
    \ElsIf{$\phi=C_i \ \phi'$}
        \State \Return  $w_i. input(\phi')$
        
    \EndIf
\EndFunction
\end{algorithmic}
\caption{Function to generate the word accepted by TQBF automaton}
\label{algo-input}
\end{algorithm}


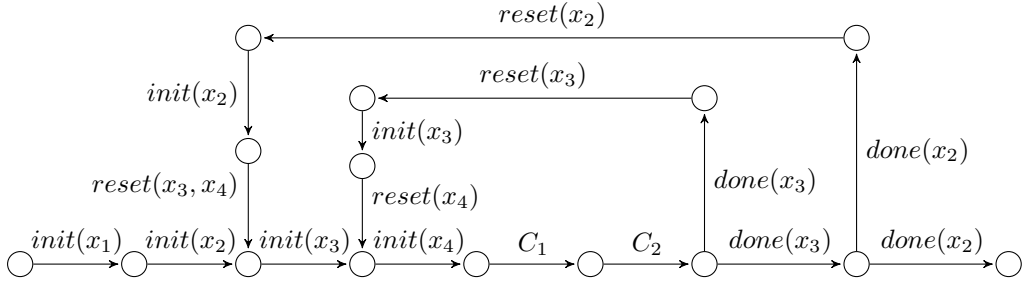
\begin{figure}
    \centering 
\begin{tikzpicture}[>=stealth',shorten >=1pt,auto,node distance=1.5cm, xscale=0.5, yscale=.5]

\node[draw, circle] (q0)      {};
\node[draw, circle] (q1) [right of=q0]{};
\node[draw, circle] (q2) [right of=q1]{};
\node[draw, circle] (q3) [right of=q2]{};
\node[draw, circle] (q4) [right of=q3]{};
\node[draw, circle] (q5) [right of=q4]{};
\node[draw, circle] (q6) [right of=q5]{};
\node[draw, circle] (q7) [right of=q6,node distance=2cm]{};
\node[draw, circle] (q8) [right of=q7,node distance=2cm]{};

\path[->] (q0)  edge  node {$\init(x_1)$} (q1);
\path[->] (q1)  edge  node {$\init(x_2)$} (q2);
\path[->] (q2)  edge  node {$\init(x_3)$} (q3);
\path[->] (q3)  edge  node {$\init(x_4)$} (q4);
\path[->] (q4)  edge  node {$C_1$} (q5);
\path[->] (q5)  edge  node {$C_2$} (q6);
\path[->] (q6)  edge  node {$\done(x_3)$} (q7);
\path[->] (q7)  edge  node {$\done(x_2)$} (q8);

\node[draw, circle] (q9) [above of=q6,node distance=2.2cm]{};
\node[draw, circle] (q10) [above of=q3,node distance=2.2cm]{};
\node[draw, circle] (q10a) [above of=q3,node distance=1.3cm]{};
\path[->] (q6)[right]  edge  node {$\done(x_3)$} (q9); 
\path[->] (q9)[above]  edge  node {$\res(x_3)$} (q10); 
\path[->] (q10)  edge  node {$\init(x_3)$} (q10a); 
\path[->] (q10a)[near start]  edge  node {$\res(x_4)$} (q3);

\node[draw, circle] (q11) [above of=q7, node distance=3cm]{};
\node[draw, circle] (q12) [above of=q2, node distance=3cm]{};
\node[draw, circle] (q13) [below of=q12]{};
\path[->] (q7)[right]  edge  node {$\done(x_2)$} (q11); 
\path[->] (q11)[above]  edge  node {$\res(x_2)$} (q12); 
\path[->] (q12)[left]  edge  node {$\init(x_2)$} (q13); 
\path[->] (q13)[left, near start]  edge  node {$\res(x_3,x_4)$} (q2);

\end{tikzpicture}

\caption{Schema of construction for TQBF instance $\exists x_1 \ \forall x_2 \ \forall x_3 \ \exists x_4 \ ((x_1 \vee \overline{x_4}) \wedge (\overline{x_2} \vee \overline{x_3} \vee x_4))$.}
\label{fig:schema}
\end{figure}

Example: for the TQBF instance $\exists x_1 \ \forall x_2 \ \forall x_3 \  \exists x_4 ((x_1 \vee \overline{x_4}) \wedge (\overline{x_2} \vee \overline{x_3} \vee x_4))$, Fig. ~\ref{fig:schema} represents a general construction schema of the corresponding \nuauto, and the input word is 
\drop{$w=\init(x_1).\init(x_2).\init(x_3).\init(x_4).w_1.w_2.$ 
$\init(x_3).\done(x_3).\init(x_4).w_1.w_2. \ \done(x_3).$ \\
$\init(x_2).\done(x_2).\init(x_3).\init(x_4).w_1.w_2.$ 
$\init(x_3).\done(x_3).\init(x_4).w_1.w_2. \ \done(x_3).$ \\ $\done(x_2)$
which finally gives: \\
}
$1_1 1_2  1_3 1_4     1_1 1_4  1_2 1_3 1_4     2_3 
1_3  1_4              1_1 1_4  1_2 1_3 1_4     2_3 
2_2 
1_2  1_3  1_4         1_1 1_4  1_2 1_3 1_4     2_3  
1_2 1_3  1_4          1_1 1_4 1_2 1_3 1_4      2_3  2_2$. 

Note that every gadget of the automaton is deterministic, except for the existential variable gadget. 
The size of $A$ is polynomial in the size of the TQBF expression. The length of the word generated by Algorithm \ref{algo-input} is in $\Omega(2^n)$ 
but it is not a parameter of the construction of $A$.
Clearly, only a word generated by Algorithm \ref{algo-input} (or a consistent renaming) can be accepted by $A$ starting with an empty memory context. Such a word can be accepted if and only if there is a solution to the TQBF instance.
\end{proof}

It remains to show that the non-emptiness problem for \nuauto is in PSPACE.
This accounts for showing that if the language recognized by an \nuauto is non-empty then it contains a word whose size together with  the length of the transition path needed to accept it, are exponentially bounded with respect to the size of the \nuauto. To this aim, we  build a finite state machine (FSM) characterizing an abstraction of the  state space of the \nuauto.

The idea is that when one can choose the letter read, observable 
transitions that write a letter in a variable are never blocking. Since the alphabet is infinite there is always a fresh letter that can be added, which we call a token. Instead, observable transitions that read a letter from a variable are blocking, in the sense that concerned variables must contain at least a letter (that we call a key). 
The first step towards the construction of the FSM, is to build a \emph{canonical} \nuauto such that 
a  word accepted by the canonical automaton will also be accepted by the initial \nuauto $A$ and each word accepted by $A$ will have a corresponding canonical version. 
Consider a \nuauto $A=(Q, q_0, F, \Delta, V, M_0)$, its \emph{canonical} version $\simpli(A)=(Q, q_0, F, \Delta, V, M_0')$ is an \nuauto over the  alphabets $K$ and $T$, where:
\begin{itemize}
    \item $K \subset \alphab$, such that $|K|=|V|$, is the set of \emph{keys} $k_v$ each of them being associated with a variable $v \in V$. If $M_0(v)\neq\emptyset$, we select $k_v$ in $M_0(v)$. Also, if $M_0(v)=\emptyset$, we select $k_v$ such that $\forall v'\in V,k_v \not\in M_0(v')$. The presence of a key in a variable  $v$ denotes the fact that $v$ is non-empty.
    \item $T=\{t_1,t_2,\ldots \} \subset \alphab$, $K\cap T=\emptyset$, is an infinite set containing letters
    called \emph{tokens} intended to be used only once,
    thus they are never stored in memory. As a consequence, no letter in $T$ is present in the initial memory context of $A$, $\forall v\in V: M_0(v) \cap T = \emptyset$.  
   \item For each $v\in V$, the initial memory context $M_0'(v)$ of $\simpli(A)$ is either empty if $M_0(v)=\emptyset$, or if $M_0(v)\neq\emptyset$, it only contains its key $k_v$.
\end{itemize}

Notice that a word $w$ is accepted by $\simpli(A)$ if the following conditions hold:
\begin{enumerate}
    \item $w\in (K\cup T)^*$, and if $t_i \in T$ appears in $w$ then it occurs  at most once, 
     \item let  $(q_0,M_0') \xRightarrow[\simpli(A)]{w}(q_f,M_f)$ with $q_f\in F$ be the accepting path for $w$ then for each intermediate configuration $(q,M)$ in the path and for each $k_v\in K$ either $k_v \in M(v)$ or for all $v' \in V$, $k_v \notin M(v')$.  
\end{enumerate} 

Observe that $\simpli(A)$ is actually the same automaton as $A$ but over a subset of the alphabet $\alphab$ and where for each $v\in V$,  $M_0'(v) \subseteq M_0(v)$, hence it is easy to conclude that the language of $\simpli(A)$ is included in the one of $A$. The complete proof is in  Appendix~\ref{app:empty-pspace}.

\begin{restatable}{lemma}{simplisubset}
 \label{lem:simplifiedincludedinnormal}
Let $A$ and $\simpli(A)$ be a \nuauto and its canonical version.
If a word $w \in \lang{\simpli(A)}$ then $w \in \lang{A}$.   
\end{restatable}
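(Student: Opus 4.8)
The statement asserts that $\lang{\simpli(A)} \subseteq \lang{A}$, where $\simpli(A)$ differs from $A$ only in that it uses a restricted initial memory context $M_0'$ with $M_0'(v) \subseteq M_0(v)$ for every $v \in V$ (keeping just the selected key $k_v$ when $M_0(v) \neq \emptyset$, and empty otherwise), and the transition structure $(Q, q_0, F, \Delta, V)$ is literally unchanged. The plan is to take an accepting run of a word $w$ in $\simpli(A)$ and show that the very same sequence of transitions is firable in $A$ from $(q_0, M_0)$, ending in a final state; this immediately gives $w \in \lang{A}$.

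\textbf{Key steps.}
First I would fix an accepting path
$(q_0, M_0') \xRightarrow[\simpli(A)]{w} (q_f, M_f')$ with $q_f \in F$, and prove by induction on the length of this path that for each prefix leading to a configuration $(q, M')$ in $\simpli(A)$, the same prefix of transitions is firable in $A$ from $(q_0, M_0)$ and reaches a configuration $(q, M)$ with $M'(v) \subseteq M(v)$ for all $v \in V$ (the \emph{domination invariant}). The base case is exactly the hypothesis $M_0'(v) \subseteq M_0(v)$. For the inductive step I would do a case analysis on the type of transition: (i) a non-observable transition $(q, \rset, q')$ resets the same variables in both automata, and set inclusion is clearly preserved; (ii) an observable read transition $(q, v, \reads, q')$ reading letter $u$: in $\simpli(A)$ we have $u \in M'(v)$, and by the invariant $M'(v) \subseteq M(v)$, so $u \in M(v)$ and the transition is firable in $A$ with both memories unchanged, preserving inclusion; (iii) an observable write transition $(q, v, \writes, q')$ reading letter $u$: here $u$ must be \emph{fresh} in $\simpli(A)$'s memory, i.e.\ $u \notin M'(v')$ for all $v'$. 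This is the delicate case, because freshness in the smaller memory $M'$ does not automatically imply freshness in the larger memory $M$.

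\textbf{Main obstacle.}
The write case is the crux. The resolution exploits the two characterizing conditions of words accepted by $\simpli(A)$ recalled just before the lemma: $w \in (K \cup T)^*$, each token $t_i \in T$ occurs at most once, and at every intermediate configuration each key $k_v$ is either in $M'(v)$ or absent from all variables. When the written letter $u$ is a token $t_i \in T$: since tokens never appear in $M_0$ (by construction $M_0(v) \cap T = \emptyset$) and each token is read only once along $w$, the letter $t_i$ cannot have been written into any variable of $A$ earlier in this run — so it is fresh in $M$ as well. When $u$ is a key $k_v \in K$: the acceptance condition forces $k_v$ to be absent from every variable of $\simpli(A)$ at this point (it cannot be in any $M'(v')$, else the write would not be enabled in $\simpli(A)$); I then need that $k_v$ is also absent from every $M(v')$ in $A$. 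For this one has to track more carefully how keys move: a key $k_v$ can only enter a variable of $A$ either by being present initially (in which case, by the choice $k_v \in M_0(v)$, it starts in $v$ and the only way $\simpli(A)$ could write it fresh is if $v$ — and by the key-invariant all variables — had been reset, which resets the corresponding variable in $A$ too), or by being written along the run, which by the domination invariant and the key-invariant happens in lockstep. I would phrase the induction hypothesis to additionally record that $M(v) \setminus M'(v)$ contains no keys and no tokens — i.e.\ the extra letters that $A$ carries beyond $\simpli(A)$ are all "old" letters from $M_0$ outside $K \cup T$ — so that freshness of a letter in $K \cup T$ transfers from $M'$ to $M$. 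With this strengthened invariant the write case goes through, and since the final state $q_f \in F$ is reached by the same transition sequence in $A$, we conclude $w \in \lang{A}$.
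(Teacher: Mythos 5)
Your proposal is correct and takes essentially the same route as the paper's proof: induction along the accepting run of $\simpli(A)$, replaying the identical transition sequence in $A$ while maintaining an invariant that the two memory contexts agree on $K\cup T$ (the paper phrases this as $M'_i$ being the projection of $M_i$ onto the keys, you as a domination invariant whose difference avoids $K\cup T$). Your explicit handling of freshness transfer in the write case is the one step the paper's own case analysis leaves implicit, so no substantive divergence.
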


We want to show that the language accepted by a \nuauto $A$ is empty if and only if the language accepted by $\simpli(A)$ is empty. The if part is the most involved and is the content of the following lemma.

\begin{lemma}\label{lem:uniform-to_simpli}
Let $A$ be a \nuauto and $\simpli(A)$ and its canonical version.
If  $w \in \lang{A}$ then there exists $w' \in \lang{\simpli(A)}$.
\end{lemma}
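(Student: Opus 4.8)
The goal is to take an arbitrary accepting run of $A$ on some word $w$ and transform it into an accepting run of $\simpli(A)$ on a suitable word $w'$ over the restricted alphabet $K \cup T$. The natural strategy is to process the accepting path $(q_0,M_0) \xRightarrow[A]{w} (q_f,M_f)$ transition by transition and maintain, as an invariant, a correspondence between the memory context $M$ reached by $A$ and a memory context $M'$ reached by $\simpli(A)$ such that for every variable $v \in V$, $M(v) = \emptyset$ if and only if $M'(v) = \emptyset$ (equivalently, $M'(v) = \{k_v\}$ exactly when $M(v) \neq \emptyset$). Since only emptiness of variables — not the actual letters stored — governs which transitions of a \nuauto are enabled (reads only need \emph{some} letter present, writes only need \emph{some} fresh letter, resets are unconditional), this invariant is exactly what is needed to replay the same sequence of transitions.

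First I would set up the translation of each transition type. A non-observable transition $(q,\rset,q')$ is copied verbatim; it preserves the invariant trivially since it empties the same set of variables on both sides. For an observable write transition $(q,v,\writes,q')$ in $A$ consuming some letter $u$: on the $\simpli(A)$ side I emit a fresh token $t_i \in T$ (one not used so far in $w'$, which exists because $T$ is infinite), which is fresh in $M'$ by construction since tokens are never stored; this is legal, and afterwards $M'(v)$ becomes $\{k_v\} \cup \{t_i\}$ — here I need to be slightly careful, because the canonical automaton is supposed to keep at most the key in each variable. The cleanest fix is to have the write transition store $k_v$ rather than $t_i$: but a \nuauto write must consume a \emph{fresh} letter, and $k_v$ may already be in $v$. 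This is where the construction needs the two-phase trick hinted at for any-letter transitions, or — more simply — one observes that $\simpli(A)$ as literally defined does store tokens transiently, and condition (2) in the statement only constrains \emph{keys}, not tokens; so I would instead arrange that immediately the token is overwritten/ignored, or rather accept that $M'(v)$ may contain $k_v$ plus some discardable tokens and weaken the invariant to "$M'(v)$ is nonempty iff $M(v)$ is nonempty, and $M'(v) \cap K \subseteq \{k_v\}$." For an observable read transition $(q,v,\reads,q')$: by the invariant $M(v) \neq \emptyset$, hence $M'(v) \neq \emptyset$ and in fact $k_v \in M'(v)$ provided the invariant has been maintained so that whenever $v$ is nonempty its key is present; so I emit the letter $k_v$, which is a legal read, and the invariant is preserved. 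The subtle point is ensuring $k_v \in M'(v)$ and not merely $M'(v)\neq\emptyset$ — this forces the invariant to be stated in the stronger form "$M(v) \neq \emptyset \iff k_v \in M'(v)$", and then every write step must re-insert $k_v$, which is exactly why the write must be handled by a small gadget (write a token, then in a two-step encoding also ensure the key is present) rather than a single transition. I expect the bookkeeping here — reconciling "writes must be fresh" with "the key must stay in the variable" — to be the main obstacle, and the resolution is the observation already made in the paper that $\simpli(A)$ has the same transition structure as $A$ and one is free to pick which concrete fresh letter a write consumes, choosing $k_v$ whenever $v$ is currently empty and a token otherwise (tokens being invisible to all guards).

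With the transition-by-transition translation in hand, the proof is a straightforward induction on the length of the accepting path: the base case is $M_0 \leftrightarrow M_0'$, which holds by the definition of $\simpli(A)$ (for each $v$, $M_0'(v)=\emptyset$ iff $M_0(v)=\emptyset$, and otherwise $M_0'(v)=\{k_v\}$); the inductive step is the case analysis above. At the end of the path we reach $(q_f,M_f)$ in $A$ and correspondingly $(q_f,M_f')$ in $\simpli(A)$ with $q_f \in F$, so the emitted word $w'$ is accepted by $\simpli(A)$. Finally I would check that $w'$ indeed satisfies conditions (1) and (2): condition (1) holds because every symbol emitted is either some $k_v \in K$ or a token chosen fresh each time, hence used at most once; condition (2) holds because the invariant guarantees that at every intermediate configuration, for each $v$, either $k_v \in M'(v)$ or $v$ is empty and (since keys are pairwise distinct and tokens are never stored) $k_v$ appears in no variable at all. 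This yields $w' \in \lang{\simpli(A)}$, completing the proof.
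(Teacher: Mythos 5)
Your proof is correct and follows essentially the same route as the paper's: an induction along the accepting run of $A$ maintaining the invariant that $k_v \in M'(v)$ iff $M(v)\neq\emptyset$, emitting $k_v$ for reads and for writes into empty variables, and a one-shot token for writes into non-empty variables. The only difference is that you are more explicit about the technicality that tokens would literally be stored by the write semantics (the paper simply declares that tokens are never stored), and your weakened invariant tracking only $M'(v)\cap K$ is a reasonable, slightly more careful way to dispose of that point; the intermediate worry about needing a two-step write gadget is unnecessary, as you yourself conclude.
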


\begin{proof}
    Let $w=u_1\dots u_n \in \lang{A}$. Then there exists an accepting path   
    $(q_0,M_0') \xRightarrow[A]{w}(q_f,M_f)$ with $q_f\in F$ and intermediate configurations $(q_i,M_i)_{i\geq 0}$. Depending on those intermediate configurations we build a new word  $w'= u_1'\dots u_n'$  and the corresponding path in $\simpli(A)$ accepting $w'$. 
    For each configuration $(q_i,M_i)$, the construction maintains an invariant:   $\forall v\in V, M'_i(v)= \{k_v\} \text{ if and only if } M_i(v)\neq \emptyset$.
The proof proceeds by induction: 
\begin{description}
    \item[Base case:] By construction the initial configuration of $\simpli(A)$ satisfies the invariant.
    \item[Inductive step:]  We examine the transition $(q_i, M_i) \xrightarrow[u_i]{\delta_i} (q_{i+1}, M_{i+1})$.
    By inductive hypothesis, we know that there exists a sequence of transitions  $(q_0,M'_0) \xRightarrow[\simpli(A)]{u'_1\dots u'_{i-1}} (q_i,M'_i)$  such that $\forall v \in V, M'_i(v) = \{k_v\} \text{ if and only if } M_i(v) \neq \emptyset$. 
    We prove there is a letter $u'_i$ leading to the configuration $(q_{i+1},M'_{i+1})$ satisfying this property, through $\delta_i$ (by construction $A$ and $\simpli(A)$ are defined on the same set of transitions), we list all possible cases:
    \begin{description}

        \item[- $\delta_i = (q_i, \rset, q_{i+1})$:] then $u_i = u'_i = \varepsilon$ and $\delta_i$ will lead to configuration with $M'_{i+1}(v) = \emptyset$ if $v\in \rset$ or $M'_{i+1}(v) = M'_i(v)$ otherwise. Hence satisfying the invariant.
        \item[- $\delta_i = (q_i, v, \reads, q_{i+1})$:] then $M_i(v) \neq \emptyset$ otherwise the transition could not be enabled, so $u'_i = k_v$ and by inductive hypothesis $M'_i(v)=\{k_v\}$. Since the memory context does not change for both automata,  the invariant is satisfied;
        \item[- $\delta_i = (q_i, v, \writes, q_{i+1})$:] if $M_i(v) = \emptyset$, then $u'_i = k_v$, and $k_v$ will be written in  variable $v$ in $M'_{i+1}$ satisfying the invariant. \\
        If  $M_i(v) \neq \emptyset$, then $u'_i = t_i\in T$ is a token and $M'_{i}(v)=M'_{i+1}(v)$, we recall that tokens are not stored in memory. 
        By inductive hypothesis we know that $M'_i(v) =\{k_v\}$ and as $\delta_i$ is a writing transition, then $M_{i+1}\neq \emptyset$, satisfying the invariant.
    \end{description}   
\end{description}
From the previous construction, it  follows immediately  that $w' \in \lang{\simpli(A)}$.    
\end{proof}

Observe that, when reading a word $w \in \lang{\simpli(A)}$,  we only need to store the letters belonging to $K$. Indeed,  tokens in $T$ may occur only once in $w$. This entails that tokens can only enable a write observable transition, while for read transitions keys are sufficient. 
Hence, in practice, tokens do not need to be added to the memory context.
%
%
%
Hence the number of different configurations in $\simpli(A)$ 
is bounded by $|Q|\cdot 2^{|K|}$ as:  
\begin{itemize}
    \item we have $|Q|$ states that can be  encoded on $log|Q|$ bits, and
    \item there are $2^{|K|}$ possibilities to store the presence or not in the memory of letters in $K$ ($2$ possibilities per letter encoded on $1$ bit since each $k_v$ can only be stored in $v$), so in total we need  $|K|$ bits.
\end{itemize}

This shows that the number of configurations is finite. On top of this, as remarked above, transitions over letters in $T$ do not add constraints on the memory context and they can be ignored. Hence the alphabet is now finite and  we can reduce the non-emptiness of FSM to the non-emptiness problem of \nuauto.

\begin{lemma}\label{lem:empty-pspace}
  The Non-Emptiness problem for \nuauto is in PSPACE
\end{lemma}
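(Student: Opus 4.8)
The plan is to turn the preceding structural analysis into a PSPACE decision procedure for non-emptiness of a \nuauto $A$. By Lemmas~\ref{lem:simplifiedincludedinnormal} and~\ref{lem:uniform-to_simpli}, $\lang{A}\neq\emptyset$ if and only if $\lang{\simpli(A)}\neq\emptyset$, so it suffices to decide non-emptiness of $\simpli(A)$. The key observation, already isolated above, is that a configuration of $\simpli(A)$ that is relevant to acceptance is completely described by a pair $(q,S)$ where $q\in Q$ and $S\subseteq V$ records exactly which variables currently hold their key (tokens are never stored, so they contribute nothing to the memory component). Thus the reachable part of the configuration space embeds into the finite set $Q\times 2^{V}$, of size $|Q|\cdot 2^{|V|}$, and each such pair is encodable in $O(\log|Q| + |V|)$ bits, i.e. polynomial space.

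First I would make the transition relation on these abstract configurations explicit. A non-observable transition $(q,\rset,q')$ sends $(q,S)$ to $(q',S\setminus\rset)$. A read transition $(q,v,\reads,q')$ is enabled from $(q,S)$ exactly when $v\in S$ (the variable must be non-empty, witnessed by its key), and leaves the memory unchanged: $(q,S)\to(q',S)$. A write transition $(q,v,\writes,q')$ is always enabled: if $v\in S$ we may read a fresh token $t_i$, which is not stored, giving $(q,S)\to(q',S)$; if $v\notin S$ we may read the key $k_v$, giving $(q,S)\to(q',S\cup\{v\})$ — and since we are asking for non-emptiness we are free to choose whichever of these moves we like. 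Any-letter transitions reduce to writes as noted in the paper. One then observes, exactly as in the proof of Lemma~\ref{lem:uniform-to_simpli}, that this abstract reachability relation faithfully captures reachability of configurations of $\simpli(A)$: a final configuration of $\simpli(A)$ is reachable iff some $(q_f,S)$ with $q_f\in F$ is reachable from the initial abstract configuration $(q_0,S_0)$, where $S_0=\{v\in V\mid M_0(v)\neq\emptyset\}$.

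Next I would phrase the decision as a graph-reachability question on the (exponentially large but implicitly presented) directed graph whose vertices are the pairs $(q,S)$ and whose edges are the abstract transitions above. This graph has a successor relation computable in polynomial time from a vertex and a transition of $A$, so by the standard argument (Savitch / $\mathrm{NL}\subseteq\mathrm{PSPACE}$, or simply a nondeterministic polynomial-space guess-and-check of a path of length at most $|Q|\cdot 2^{|V|}$ together with the $\mathrm{NPSPACE}=\mathrm{PSPACE}$ collapse), reachability of a final vertex is decidable in space polynomial in the size of the vertex encoding, hence in $\poly(|A|)$. Concretely: nondeterministically walk from $(q_0,S_0)$, at each step storing only the current pair $(q,S)$ and a binary step counter bounded by $|Q|\cdot 2^{|V|}$ (which needs only $\log|Q|+|V|$ bits), guessing the next transition of $A$ and updating $S$ accordingly; accept if a final state is hit before the counter overflows. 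This runs in polynomial space, and together with Lemma~\ref{lem:empty-hard} it yields PSPACE-completeness.

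The main obstacle, and the only place requiring real care, is the correctness of the abstraction — i.e. that the finite graph on $Q\times 2^{V}$ is sound and complete for non-emptiness of $\simpli(A)$, and in turn of $A$. Soundness (every abstract path lifts to a genuine run of $\simpli(A)$, hence of $A$ by Lemma~\ref{lem:simplifiedincludedinnormal}) is routine: replay the abstract path, using a previously-unused token from the infinite set $T$ each time a write transition is taken with $v\in S$, which is always possible since $T$ is infinite and tokens are never retained. Completeness is exactly the content of Lemma~\ref{lem:uniform-to_simpli}: any accepting run of $A$ is mirrored, via the invariant $M'_i(v)=\{k_v\}\iff M_i(v)\neq\emptyset$, by an accepting run of $\simpli(A)$ whose configurations are precisely the abstract pairs. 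Once this correspondence is stated, the PSPACE upper bound is immediate from the cardinality bound $|Q|\cdot 2^{|V|}$ on the abstract configuration space and Savitch's theorem; I would keep the proof of the lemma short by citing Lemmas~\ref{lem:simplifiedincludedinnormal} and~\ref{lem:uniform-to_simpli} and the counting argument already spelled out in the surrounding text.
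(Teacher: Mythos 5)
Your proposal is correct and follows essentially the same route as the paper: reduce to the canonical automaton $\simpli(A)$ via Lemmas~\ref{lem:simplifiedincludedinnormal} and~\ref{lem:uniform-to_simpli}, abstract configurations to pairs in $Q\times 2^{|V|}$ (the paper's implicit FSM, spelled out in Appendix~\ref{app:empty-pspace}), and decide reachability by a nondeterministic polynomial-space walk with an exponentially bounded counter, concluding via $\mathrm{NPSPACE}=\mathrm{PSPACE}$. The transition-by-transition case analysis you give for the abstract successor relation matches the paper's adapted PATH algorithm, so no further comparison is needed.
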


\begin{proof}
Given a \nuauto $A = (Q, q_0, F, \Delta, V, M_0)$, its canonical form has at most $|Q|2^{|K|}$ configurations. 
The state space of $\simpli(A)$ could be constructed as an FSM by merging all transitions of $A$ writing a token from $T$ going from state $q$ to $q'$ into a unique $\varepsilon$-transition. This way, the FSM would have $O(|\Delta| 2^{|K|})$ transitions as each configuration $(q,M')$ of $\simpli(A)$ has at most as many outgoing transitions as $q$ in $A$.
A formal definition of the construction is in  Appendix~\ref{app:empty-pspace}.

Moreover, if the underlying FSM is non-empty it implies that there is a sequence of at most $O(|\Delta| 2^{|K|})$ transitions from an initial state of $A$ to one of its accepting state. 
Recall that finding a path between two vertexes/states in a graph $(V,E)$ is a problem called PATH which is NL-complete \cite{arora2006computational}. 
The algorithm in logarithmic space for PATH could be adapted to find if there exists a sequence of transitions from an initial state of $A$ to an accepting state. Since this sequence of transitions is exponential in the size of $A$, thus we prove that the problem is in NPSPACE for \nuauto. 
Since PSPACE=NPSPACE \cite{arora2006computational} we show that the Non-Emptiness problem for $\nu$ is in PSPACE. 

The PATH algorithm adapted to our problem memorizes a state of $A$, the memory of $\simpli(A)$ and a counter on $O(\log(|\Delta|)+ {|K|})$ bits. Each time that the counter is augmented by one, a transition starting in the memorized state will be chosen randomly and applied as follows: if this transition is a reset, then the state is updated and the memory is reset. If this transition is a write, then the state is updated and the corresponding key is added to the memory (if not already present). If the transition is a read then either the key is not in the memory and the algorithm halts and rejects or the state is updated. As soon as an accepting state is reached then the algorithm halts and accepts. If the counter reaches its maximum then the algorithm halts and rejects. 
Note that the FSM is not actually constructed in this algorithm, but one of its paths is explored dynamically. 
%
%
%
\end{proof}


\begin{theorem} \label{theo:nunep-psapceComplete}
 The non-emptiness problem for \nuauto PSPACE-complete. 
\end{theorem}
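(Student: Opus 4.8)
The plan is to obtain the theorem simply as the combination of the two bounds isolated above. For the lower bound I would invoke Lemma~\ref{lem:empty-hard}, whose reduction turns a fully quantified Boolean formula $Q_1x_1\ldots Q_nx_n(C_1\wedge\cdots\wedge C_m)$ into a \nuauto together with an (exponentially long but not part of the automaton's description) input word produced by the recursive \textsc{input} procedure of Algorithm~\ref{algo-input}: existential variables are handled by the nondeterministic choice gadget of Fig.~\ref{fig:init-xi}, universal variables by the looping gadget of Fig.~\ref{fig:univ} that runs the rest of the machine once for each truth value of $x_i$ and uses the flag variable $\widetilde{X_i}$ to detect that both cases have been processed, and clauses by the gadget of Fig.~\ref{fig:clause-TQBF}; wiring the gadgets in reverse order of declaration makes the word accepted precisely when the formula is true. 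Since TQBF is PSPACE-complete (Meyer--Stockmeyer) and the construction is polynomial, non-emptiness for \nuauto is PSPACE-hard.

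For the upper bound I would invoke Lemma~\ref{lem:empty-pspace}. The key is to replace the infinite alphabet by the finite key/token alphabet of the canonical automaton $\simpli(A)$: writes over an infinite alphabet are never blocking (a fresh letter always exists), so a write into a variable $v$ is faithfully simulated either by inserting the key $k_v$, when $v$ was empty, or by reading a throw-away token $t_i$, when $v$ was already non-empty and its status is unchanged, while reads are simulated by reading $k_v$. This is exactly the invariant ``$M'_i(v)=\{k_v\}$ iff $M_i(v)\neq\emptyset$'' of Lemma~\ref{lem:uniform-to_simpli}, giving $\lang{A}=\emptyset$ if and only if $\lang{\simpli(A)}=\emptyset$ (with Lemma~\ref{lem:simplifiedincludedinnormal} for the easy inclusion). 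Since tokens occur at most once they need not be stored, so a configuration of $\simpli(A)$ is a state plus the subset of $K$ currently present, i.e.\ $|Q|\cdot 2^{|K|}$ configurations with $|K|=|V|$; non-emptiness is then a reachability question in an implicitly represented graph of exponential size, solved by the standard NL algorithm for PATH run with a configuration description and a counter of polynomially many bits, hence in $\mathrm{NPSPACE}=\mathrm{PSPACE}$.

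Putting the two lemmas together yields PSPACE-completeness, which is the statement of the theorem. The genuinely delicate step --- already discharged in Lemma~\ref{lem:uniform-to_simpli} rather than in the theorem itself --- is the correctness of the alphabet finitization: one must argue that collapsing all token-writing transitions out of a state $q$ into a single $\varepsilon$-transition of the finite-state abstraction preserves (non-)emptiness, and that no accepting run is lost by restricting from $\alphab$ to $K\cup T$ and from $M_0$ to the smaller initial memory $M_0'$. Everything else is bookkeeping: the polynomial size of the TQBF gadgets and of the configuration encoding, and the observation that Savitch's theorem removes the nondeterminism in the PATH-style search.
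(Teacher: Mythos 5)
Your proposal is correct and follows exactly the paper's route: the theorem is obtained by combining Lemma~\ref{lem:empty-hard} (PSPACE-hardness via the TQBF reduction with the existential, universal and clause gadgets) with Lemma~\ref{lem:empty-pspace} (membership via the canonical key/token automaton, the $|Q|\cdot 2^{|K|}$ configuration bound, and the NL-style PATH search inside $\mathrm{NPSPACE}=\mathrm{PSPACE}$). Your summaries of both lemmas accurately reflect the paper's arguments, so there is nothing to add.
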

\begin{proof}
By Lemmata \ref{lem:empty-hard} and \ref{lem:empty-pspace}.
\end{proof}

\section{Conclusions}\label{sec:concl}
We have discussed the complexity of membership and non-emptiness for three formalisms \nuauto, \lama and \hra.
We showed that concerning the membership problem, all three kinds of automata fall in the NP-complete class. Emptiness is more delicate. We proved that the non-emptiness problem for \nuauto is PSPACE-complete. The problem remains open for \lama.

As a matter of fact, concerning the non-emptiness problem for \lama we know the lower bound and the upper bound of the complexity class. As a consequence of Theorem \ref{theo:nunep-psapceComplete} and from the expressiveness results in \cite{BertrandKP22}, the complexity is PSPACE-hard.
However, it is a strict lower bound as we are able to construct a LaMA with a shortest accepted word of size in $O(2^{2^n})$ with $n$ the number of variables.
We believe the problem is actually Ackermann-complete. 
In our previous work\cite{BertrandKP22}, we showed an exponential encoding of \lama into \hra for which the non-emptiness problem is shown to be Ackermann-complete in \cite{GT2016}, which gives us the Ackermann class membership. 


As for future work, apart from closing the open problem, we plan to address the expressiveness of \lama. Indeed the number of layers seems to create a hierarchy of expressiveness and complexity.
\newpage
\bibliography{bib}
\newpage
\appendix
\section{Simulating the membership for \lama in {$\nu$}{nu}-automata -- proofs}
\label{app:simul-member-lama-nu}
The proof of the main theorem relies on some lemmata, showing that the encoding is correct.
\begin{restatable}[Non-observable transitions]{lemma}{nonobstrans}
\label{lem:nonobstrans}
Let $A$ be an $n$-\lama. $(q,M) \xrightarrow[A]{\varepsilon} (q', M')$ if and only if $(q,\enco{M}_\xi) \xrightarrow[\enco{A}_{\xi}]{\varepsilon} (q', \enco{M'}_\xi)$.
\end{restatable}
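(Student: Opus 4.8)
The plan is to prove the biconditional about non-observable ($\varepsilon$-)transitions by unwinding the definitions on both sides, since the encoding $\enco{A}_\xi$ keeps all $\varepsilon$-transitions of $A$ verbatim and the renaming $\enco{\cdot}_\xi$ acts essentially componentwise on memory contexts. The only subtlety is bookkeeping: checking that resetting a set of variables commutes with the renaming function $\xi^n$.

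\smallskip

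\textbf{Forward direction.} Suppose $(q,M) \xrightarrow[A]{\varepsilon} (q',M')$. By Definition~\ref{def:eps_transitionlama} this is witnessed by a non-observable transition $\delta_\varepsilon = (q,\rset,q') \in \setsilent$ with $M'(v^l) = \emptyset$ for $v^l \in \rset$ and $M'(v^l) = M(v^l)$ otherwise. By the construction of $\enco{A}_\xi$, the set of non-observable transitions of $\enco{A}_\xi$ is exactly $\setsilent$, so $\delta_\varepsilon$ is also a transition of $\enco{A}_\xi$. It remains to check that applying $\delta_\varepsilon$ to the encoded context $\enco{M}_\xi$ yields exactly $\enco{M'}_\xi$. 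For $v^l \in \rset$: $\enco{M'}_\xi(v^l) = \{\proj{\xi^n(u)}{l} \mid u \in M'(v^l)\} = \{\proj{\xi^n(u)}{l} \mid u \in \emptyset\} = \emptyset$, which is what applying $\delta_\varepsilon$ to $\enco{M}_\xi$ produces on $v^l$. For $v^l \notin \rset$: applying $\delta_\varepsilon$ leaves $\enco{M}_\xi(v^l)$ unchanged, and $\enco{M'}_\xi(v^l) = \{\proj{\xi^n(u)}{l} \mid u \in M'(v^l)\} = \{\proj{\xi^n(u)}{l} \mid u \in M(v^l)\} = \enco{M}_\xi(v^l)$. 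Hence $(q,\enco{M}_\xi) \xrightarrow[\enco{A}_\xi]{\varepsilon} (q',\enco{M'}_\xi)$.

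\smallskip

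\textbf{Backward direction.} Conversely, suppose $(q,\enco{M}_\xi) \xrightarrow[\enco{A}_\xi]{\varepsilon} (q',\enco{M'}_\xi)$. This is witnessed by some $\delta_\varepsilon = (q,\rset,q') \in \setsilent$ (the set of non-observable transitions of $\enco{A}_\xi$ coincides with that of $A$). The same transition $\delta_\varepsilon$ is available in $A$, and it produces a context $M''$ with $M''(v^l) = \emptyset$ for $v^l \in \rset$ and $M''(v^l) = M(v^l)$ otherwise, i.e. $(q,M) \xrightarrow[A]{\varepsilon} (q',M'')$. By the forward direction (or a direct recomputation), $\enco{M''}_\xi$ equals the context obtained by applying $\delta_\varepsilon$ to $\enco{M}_\xi$, which is $\enco{M'}_\xi$ by hypothesis. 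Since $\enco{\cdot}_\xi$ is injective on memory contexts (because $\xi^n$ is injective and $\proj{\xi^n(\cdot)}{l}$ is injective for each fixed $l$, as guaranteed by the defining properties of the renaming function), $\enco{M''}_\xi = \enco{M'}_\xi$ forces $M'' = M'$. Therefore $(q,M) \xrightarrow[A]{\varepsilon} (q',M')$.

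\smallskip

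I do not expect a genuine obstacle here: the statement is essentially a sanity-check lemma that isolates the easy half of the simulation (the $\varepsilon$-transitions are copied unchanged, and renaming is transparent to resets). The only point requiring a little care is the injectivity of $\enco{\cdot}_\xi$ used in the backward direction to recover $M' = M''$; this follows from the injectivity clauses built into the definition of $\xi^n$. One should also note that $\enco{\cdot}_\xi$ is only defined on contexts arising from actual words over $\alphab$, but since $M'$ here is such a context (it is reached along the run of $A$), this causes no issue.
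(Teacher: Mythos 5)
Your proof is correct and follows essentially the same route as the paper, which simply observes that $\enco{A}_\xi$ retains the non-observable transitions of $A$ verbatim and that they have the same reset semantics in both models; you merely spell out the componentwise bookkeeping (including the injectivity of $\enco{\cdot}_\xi$ needed to recover $M'=M''$ in the backward direction) that the paper leaves implicit.
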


\begin{proof} 
For both sides the proof follows by the construction of the \nuauto, observing that non-observable transitions have the same semantics in both families of automata.
\end{proof}

\begin{lemma}[Observable transitions]\label{lem:obstrans}
Let $A$ be an $n$-\lama. $(q,M) \xrightarrow[A]{u} (q', M_1)$ if and only if $$(q,\enco{M}_\xi) \xRightarrow[\enco{A}_{\xi}]{\xi^n(u)} (q', \enco{M_1}_\xi)$$.
\end{lemma}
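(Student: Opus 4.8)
The statement says that one observable transition of the $n$-\lama $A$ on a letter $u$ corresponds exactly to the length-$n$ block of transitions of $\enco{A}_\xi$ on $\xi^n(u)$. I would prove the two directions separately, but both reduce to the same bookkeeping: the encoded block $\Delta'_o$ associated to $\delta = (q,\alpha,q')$ is the path $q = q_\delta^1 \to q_\delta^2 \to \dots \to q_\delta^n = q'$, where the transition from $q_\delta^l$ to $q_\delta^{l+1}$ either reads/writes $v^l$ (if $\alpha(l)=(v^l,\mathbf x)$) or is an any-letter transition (if $\alpha(l)=\sharp$). On the other side, $\xi^n(u) = u^1 u^2 \dots u^n$ with all $u^i$ pairwise distinct and, crucially, $u^i \neq u_2^j$ whenever $u \neq u_2$. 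The key observation, which I would isolate first, is that $\enco{M}_\xi(v^l) = \{\proj{\xi^n(w)}{l} \mid w \in M(v^l)\}$ only ever contains ``layer-$l$ letters'', i.e. letters of the form $z^l$; hence testing membership/freshness of $u^l$ in $\enco{M}_\xi(v^l)$ is equivalent to testing membership/freshness of $u$ in $M(v^l)$, because $\proj{\xi^n}{l}$ is injective in its argument.

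\textbf{Forward direction.} Assume $(q,M) \xrightarrow[A]{u} (q',M_1)$ via $\delta$. I would exhibit the path in $\enco{A}_\xi$ step by step: starting from $(q_\delta^1, \enco{M}_\xi) = (q,\enco{M}_\xi)$, at step $l$ I claim the configuration reached is $(q_\delta^{l}, N_{l})$ where $N_{l}$ agrees with $\enco{M_1}_\xi$ on layers $< l$ and with $\enco{M}_\xi$ on layers $\geq l$. The step from $l$ to $l+1$ consumes $u^l$: if $\alpha(l)=\sharp$ it is an any-letter transition (always firable, memory unchanged on layer $l$); if $\alpha(l)=(v^l,\reads)$ then by the \lama semantics $u \in M(v^l)$, so $u^l = \proj{\xi^n(u)}{l} \in \enco{M}_\xi(v^l) = N_l(v^l)$ and the read fires leaving memory unchanged; if $\alpha(l)=(v^l,\writes)$ then $u$ is fresh for layer $l$ in $M$, and since every variable $w^l$ of layer $l$ only stores layer-$l$ letters in $\enco{M}_\xi$, and $u^l$ differs from every $z^l$ with $z \neq u$, the letter $u^l$ is fresh in $N_l$ (restricted to layer $l$ — and there is nothing on other layers to worry about since $\nu$-automata injectivity is over all of $V'$, but $u^l$ is a layer-$l$ letter so it cannot clash with layer-$k$ letters for $k \neq l$); the write adds $u^l$, matching $\enco{M_1}_\xi(v^l) = \enco{M(v^l)\cup\{u\}}_\xi$. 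After $n$ steps we reach $(q', \enco{M_1}_\xi)$. The one subtlety here is the freshness check in the write case: I must confirm that in $\enco{A}_\xi$ the ambient alphabet of ``other layers'' never pollutes layer $l$, which follows because $\xi^n$ sends distinct letters to distinct length-$n$ tuples with distinct entries, so the sets of layer-$l$ letters for different $l$ are disjoint.

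\textbf{Backward direction.} Assume $(q,\enco{M}_\xi) \xRightarrow[\enco{A}_\xi]{\xi^n(u)} (q',\enco{M_1}_\xi)$. First I would argue that such a run must use exactly one encoded block $\Delta'_o$ of a single observable transition $\delta=(q,\alpha,q')$ interleaved with no $\varepsilon$-transitions of its own — actually I should be slightly careful: non-observable transitions could in principle be inserted, but since the statement's right-hand side fixes both endpoints and the exact word $\xi^n(u)$ of length $n$, and the $Q_o$ states $q_\delta^l$ each have a unique outgoing transition, the path is forced to be precisely the block for some $\delta$ with the right source and target (here I'd appeal to the construction: states in $Q_o$ are private to one $\delta$). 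Then I run the same invariant backwards: consuming $u^l$ at state $q_\delta^l$ forces $\alpha(l)$ to be compatible with $u^l$, i.e. if it is a read then $u^l \in \enco{M}_\xi(v^l)$ hence $u \in M(v^l)$; if a write, $u^l$ fresh hence $u$ fresh for layer $l$; thus $\delta$ is enabled at $(q,M)$ on $u$ and produces exactly $M_1$. I expect the main obstacle to be nailing down this forcing argument cleanly — precisely, that no spurious $\varepsilon$-transitions or partial blocks can occur — which is why the cleanest route is to quote the structural fact (already baked into the definition of $\enco{A}_\xi$) that the intermediate states $q_\delta^l$ belong to no other transition and carry a single outgoing edge, together with Lemma~\ref{lem:nonobstrans} to handle any $\varepsilon$-moves uniformly. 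Everything else is the routine layer-by-layer induction sketched above.
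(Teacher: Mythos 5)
Your proposal is correct and follows essentially the same route as the paper's proof: a layer-by-layer induction along the encoded block $q_\delta^1\to\dots\to q_\delta^n$ with the invariant that layers below the current index already carry $\enco{M_1}_\xi$ and those above still carry $\enco{M}_\xi$, a case split on read/write/$\sharp$, and the observation that the renaming $\xi^n$ keeps layer-$l$ letters disjoint so the $\nu$-automaton's global injectivity check reduces to the \lama's per-layer freshness. Your extra care in the backward direction (that $Q_o$ states are private to one $\delta$ and unreachable by $\varepsilon$-transitions, so the run is forced to be a single block) is a point the paper leaves implicit, but it does not change the argument.
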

\begin{proof}
Let $(q,M) \xrightarrow[A]{u} (q', M_1)$ then by definition we know that $\delta = (q, \alpha, q') \in \setobs$ and by construction of $\enco{A}_{\xi}$, there is a path of length $n$ of observable transitions from $q$ to $q'$.
Hence, let $\xi^n(u)= u^1 \dots u^n$, we need to prove that $(q,\enco{M}) \xRightarrow[\enco{A}_{\xi}]{\xi^n(u)} (q', M_2)$ with $M_2 = \enco{M_1}_\xi$.
Each of these subsequent transitions 
considers the changes to a variable of the corresponding layer. By induction of the length of this path $n$. The base case $n = 1$ is trivial as a $1$-\lama is also a \nuauto. For $n > 1$, we know by inductive hypothesis that 
$(q,\enco{M}_\xi) \xRightarrow[\enco{A}_{\xi}]{u^1 \dots u^{n-1}} (q_{\delta}^{n-1}, M_3 )$
where for all $l<n, M_3(\proj{\alpha(l)}{1}) = \enco{M_1}_\xi(\proj{\alpha(l)}{1})$ and $ M_3(\proj{\alpha(n)}{1}) = \enco{M}_\xi(\proj{\alpha(n)}{1})$. 
Now for the last step in the derivation, we have three possible cases, let $\proj{\alpha(n)}{1} = v^n$:
\begin{description}
\item[$\proj{\alpha(n)}{2} = \reads$:]  by hypothesis we have $u \in M(v)$ and by construction $u^n \in \enco{M}_\xi(v)$ hence the transition can be executed  in $\enco{A}_{\xi}$ leading to $(q', \enco{M_1}_\xi)$.

\item[$\proj{\alpha(n)}{2} = \writes$:] by hypothesis  $\forall v_1^n \in V, u \not\in M(v_1^n)$, and $M'(v^n) = M(v^n)\cup\{u\}$. 
By construction, we know that $\forall v_1^n \in V', u^n \not\in \enco{M}_\xi(v_1^n)$. Moreover by definition of $\xi^n(u)$, $u^n \not\in \enco{M}_\xi(v_2)$ for all $v_2 \in V'$. Hence injectivity is satisfied and $u^n$ is added to $\enco{M}_\xi(v)$ as expected leading to $(q', \enco{M_1}_\xi)$.

\item[$\alpha(n) = \ignore$:] the letter is consumed without examining any variable, leading  to $(q', \enco{M_1}_\xi)$.
\end{description}

For the other direction, let $\xi^n(u)= u^1 \dots u^n$ we know  that $(q,\enco{M}_\xi) \xRightarrow[\enco{A}_{\xi}]{\xi^n(u)} (q', M_2)$, hence by construction  $\delta = (q, \alpha, q') \in \setobs$. We need to show that $(q,M) \xrightarrow[A]{u} (q', M_1)$  and $M_2 = \enco{M_1}_\xi$.

Similarly as before the proof proceeds by induction on the length of the path from $q$ to $q'$ in $\enco{A}_{\xi}$. For $n > 1$, we have that 
$(q,\enco{M}) \xRightarrow[\enco{A}_{\xi}]{u^1 \dots u^{n-1}} (q_{\delta}^{n-1}, M_3 )$ and by inductive hypothesis for all $l<n, M_3(\proj{\alpha(l)}{1}) = \enco{M_1}_\xi(\proj{\alpha(l)}{1})$ and $ M_3(\proj{\alpha(n)}{1}) = \enco{M}_\xi(\proj{\alpha(n)}{1})$. 
Now for the last step in the derivation, we have three possible cases, let $\proj{\alpha(n)}{1} = v^n$:
\begin{description}
\item[$\proj{\alpha(n)}{2} = \reads$:]  by hypothesis we have $u^n \in \enco{M}_\xi(v^n) = M_3(v^n)$ thus by definition of the encoding of a memory context $\enco{\cdot}_\xi$ $u \in M(v^n)$ hence the transition can be executed  in $A$ leading to $(q', M_1)$.

\item[$\proj{\alpha(n)}{2} = \writes$:] by hypothesis $u^n \not\in \enco{M}_\xi(v_1)$ 
for all $v_1 \in V'$ this entails  that $\forall v_1^n \in V, u \not\in M(v_1^n)$ thus leading to $(q', M_1)$.

\item[$\alpha(n) = \ignore$:] the letter is consumed without examining any variable, leading again to $(q', M_1)$.
\end{description}

\end{proof}

\lamatonu*
\begin{proof}
The proof follows (for both sides) by induction on the length of the derivation of $w$ and $\xi^n(w)$ respectively using previous Lemmas~\ref{lem:nonobstrans} and \ref{lem:obstrans}.
\end{proof}

\section{Simulating the membership for \hra in \lama~-- proofs}
\label{app:simul-member-hra-lama}
Similarly as before, we first need to introduce some auxiliary definitions and lemmata.

\begin{definition}
    Let $M$ be a memory context, $H = h_1 \dots h_n$ a set of histories of an \hra $A$ and $w \in \alphab^*$. Then a memory context $M'$ for the set $V= h^1, \dots, h^n, \omega^1,\dots, \omega^n$ of variables of $\enco{A}_\zeta$ is \emph{well-formed with respect to $w$ and $M$} if it satisfies the following constraints:
     For all $l \in [1,n]$, if  $u\in M(h_l)$ then 
  \begin{enumerate}
         \item if $u$ appears in $w$ and $u_j$ is the last occurrence of $u$ in $w$ then $\proj{\zeta_{i_{u_j}}(u)}{2} \in M'(h^l)$
         \item if $u$ does not appear in $w$ then $\proj{\zeta_{1}(u)}{1} \in M'(h^l)$ (i.e., $u_0 \in  M'(h^l)$)
     \end{enumerate}
For all $l \in [1,n]$, if  $u \notin M(h_l)$ then 
  \begin{enumerate}
         \item if $u$ appears in $w$ and $u_j$ is the last occurrence of $u$ in $w$ then $\proj{\zeta_{i_{u_j}}(u)}{2} \notin M'(h^l)$
         \item if $u$ does not appear in $w$ then $\proj{\zeta_{1}(u)}{1} \notin M'(h^l)$ 
     \end{enumerate}

\end{definition}

\begin{lemma} \label{lem:encodingHRAtoLAMA}
    Let $A$ be an \hra and $w=u_1\dots u_m \in \alphab^*$. $(q_0,M_0) \xRightarrow[A]{w} (q,M)$ if and only if $(q_0,M_0') \xRightarrow[\enco{A}_\zeta ]{\zeta(w)} (q,M')$ with $M'$ well-formed with respect to $w$ and $M$.
\end{lemma}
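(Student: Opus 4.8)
The plan is to prove Lemma~\ref{lem:encodingHRAtoLAMA} by induction on the length $m$ of the word $w$, establishing both directions of the equivalence simultaneously (since the statement is an ``if and only if'' that threads through a common induction). The base case $m=0$ is immediate: $w=\varepsilon$, $\zeta(w)=\varepsilon$, the only derivation in each automaton is the trivial one, and $M_0'$ is well-formed with respect to $\varepsilon$ and $M_0$ directly from Definition~\ref{def:enco_hra_lama} (all letters of $M_0$ fall in the ``does not appear in $w$'' case, and $M_0'(h^l)=\{\proj{\zeta_1(u)}{1}\mid u\in M_0(h_l)\}$, while $M_0'(\omega^l)=\emptyset$). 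For the inductive step, write $w = w' u_{m}$ with $w'=u_1\dots u_{m-1}$, assume the statement for $w'$, and analyze the single step reading $u_m$ in $A$ against the pair of steps reading $\zeta_{i_{u_m}}(u_m)=u_m^{j-1}u_m^{j}$ (where $j=i_{u_m}$ is the occurrence count) in $\enco{A}_\zeta$.

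The core of the argument is the interplay between the two observable transitions $(q,\alpha_{H_r},q_\delta)$ and $(q_\delta,\alpha_{H_w},q')$ that encode a single HRA transition $\delta=(q,H_r,H_w,q')$. For the forward direction: given $(q_0,M_0)\xRightarrow{w'}(\bar q,\bar M)\xrightarrow{u_m}(q,M)$ via $\delta$, I would use the inductive hypothesis to get $(q_0,M_0')\xRightarrow{\zeta(w')}(\bar q,\bar M')$ with $\bar M'$ well-formed w.r.t.\ $w'$ and $\bar M$. Then I check that reading $u_m^{j-1}$ fires $\alpha_{H_r}$: the key point is that $u_m^{j-1}=\proj{\zeta_{i_{u_m'}}(u_m)}{2}$ where $i_{u_m'}$ is the number of occurrences of $u_m$ in $w'$ (this is the ``last occurrence in $w'$'' letter if $u_m$ occurred before, and is $\proj{\zeta_1(u_m)}{1}=u_m^0$ if it is the first occurrence), so well-formedness tells us exactly that $u_m^{j-1}\in\bar M'(h^l)\iff u_m\in\bar M(h_l)\iff h_l\in H_r$; the injectivity of layer $l$ together with the write to $\omega^l$ enforces the ``absence'' requirement when $h_l\notin H_r$. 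Then reading $u_m^{j}$ fires $\alpha_{H_w}$: since $u_m^j$ is fresh for every layer (the index $j$ strictly exceeds all previously used indices for $u_m$, and $\zeta$ never reuses indices across distinct base letters), the write transitions $(h^l,\writes)$ succeed for $h_l\in H_w$ and layers with $\sharp$ are untouched. Finally I verify that the resulting memory context $M'$ is well-formed w.r.t.\ $w$ and $M$: for $h_l\in H_w$ we have $u_m^j\in M'(h^l)$ and $u_m^j\in M(h_l)$ and $u_m$'s last occurrence in $w$ is position $m$; for $h_l\notin H_w$ the letter $u_m^j$ is absent and so is $u_m$ from $M(h_l)$; the older copies $u_m^{<j}$ are garbage that never affect well-formedness since only the last occurrence matters. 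The backward direction is the mirror image: any accepting pair of transitions on $u_m^{j-1}u_m^j$ must go through some $q_\delta$, hence corresponds to a unique $\delta$, and the same well-formedness bookkeeping shows the HRA transition is enabled and produces a matching memory context.

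The main obstacle, and the part that needs the most care, is the treatment of \textbf{letters in $M$ that never appear in $w$} versus letters whose occurrences are ``stale.'' Well-formedness has two clauses precisely to handle the fact that a letter $u\in M_0(h_l)$ that is never read keeps its index-$0$ representative $u^0$ forever, whereas a letter that has been read $j$ times is represented by $u^j$ and all of $u^0,\dots,u^{j-1}$ are dead weight in $\omega$-variables or in $h$-variables they were never removed from. I need to make sure the inductive step correctly transitions a letter from the ``appears in $w'$'' bucket to the ``appears in $w$'' bucket when it is exactly the letter $u_m$ being read, and leaves all other letters in their buckets; and I must confirm that the $\varepsilon$-transitions (resets) in $\setsilent'$ preserve well-formedness, which is a short separate check since resetting $h^l$ in $\enco{A}_\zeta$ mirrors resetting $h_l$ in $A$ and empties the corresponding sets on both sides. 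A secondary subtlety is ensuring $u_m^j$ is genuinely fresh for \emph{all} layers in the $\enco{A}_\zeta$ memory, not just the ones in $H_w$: this follows because $\zeta$ assigns to the $m$-th letter of $w$ an index strictly larger than any index assigned to an earlier occurrence of the same base letter, and distinct base letters get disjoint index sets, so $u_m^j$ cannot coincide with any letter stored in any $h^l$ or $\omega^l$ after processing $\zeta(w')$ — this is exactly the observation already flagged after the definition of $\zeta$.
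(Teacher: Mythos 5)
Your proposal is correct and follows essentially the same route as the paper's proof: induction on the derivation with a case split between non-observable transitions and the two-transition encoding of an observable transition, using well-formedness as the invariant that ties $M$ to $M'$ (in particular, the identification of the first copy $u_m^{j-1}$ with the last-occurrence representative from the prefix, and the freshness of $u_m^{j}$ for the write phase). The only cosmetic difference is that you run both directions through one simultaneous induction while the paper treats them as two separate inductions (on the derivation of $w$ and of $\zeta(w)$ respectively); this changes nothing substantive.
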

\begin{proof}
~
\begin{description}
    \item[$\Rightarrow$]

The proof proceeds by induction on the length of the derivation of $w$ with a case analysis on the last step of the derivation (with $w = w' \cdot x$):
$$(q_0,M_0) \xRightarrow[A ]{w'} (q_p,M_p) \xrightarrow[A ]{x} (q,M)$$

\begin{description}
\item[Non-observable transitions:] Let $(q_p,M_p) \xrightarrow[A]{\varepsilon} (q,M)$ then  $x = \varepsilon$, $\delta=(q_p,H_\emptyset,q) \in \setsilent$. 
By inductive hypothesis we know that $(q_0,M_0') \xRightarrow[\enco{A}_\zeta ]{\zeta(w)} (q_p,M_p')$ and $M_p'$ is well-formed with respect to $w$ and $M_p$. By construction $\delta'=(q_p, \rset ,q) \in \setsilent'$ with $\rset=\{h^i\mid h_i \in H_\emptyset\}$ hence $(q_p,M_p') \xrightarrow[\enco{A}_\zeta ]{\varepsilon} (q,M')$
where all variables in $\rset$ have been reset. It immediately follows that $M'$ is well-formed with respect to to $w$ and $M$. 
    
\item[Observable transitions:] Let 
$(q_p,M_p) \xrightarrow[A]{u_m} (q,M)$ then $x = u_m$, $\delta = (q_p,H_r,H_w,q) \in \setobs $.
By inductive hypothesis we know that $(q_0,M_0') \xRightarrow[\enco{A}_\zeta ]{\zeta(u_1 \dots u_{m-1})} (q_p,M_p')$ and $M_p'$ is well-formed with respect to $u_1 \dots u_{m-1}$ and $M_p$.
Moreover by construction there are two transitions $(q_p, \alpha_{H_r}, q_\delta)$ and $(q_\delta, \alpha_{H_w} ,q)$ in $\setobs'$. Hence
$(q_p,M_p') \xrightarrow[\enco{A}_\zeta ]{\proj{\zeta_{i_{u_m}}(u_m)}{1}} (q_{\delta},M'') \xrightarrow[\enco{A}_\zeta]{\proj{\zeta_{i_{u_m}}(u_m)}{2}} (q, M')$.

Transition $(q_p,M_p') \xrightarrow[\enco{A}_\zeta ]{\proj{\zeta_{i_{u_m}}(u_m)}{1} } (q_{\delta},M'')$ can be executed as, since $M_p'$ is well formed with respect to $u_1 \dots u_{m-1}$ and $M_p$, and  $u_m\in M_p(h_i)$, $\forall h_j \in H_r$ then $\proj{\zeta_{i_{u_m}}(u_m)}{1} \in M_p'(h^j)$.
Moreover since $u_m \notin M_p(h_j)$ for all $h_j \notin H_r$ then $\proj{\zeta_{i_{u_m}}(u_m)}{1} \notin M_p'(h^j)$ then $\proj{\zeta_{i_{u_m}}(u_m)}{1}$ can be added to $\omega^j$. Finally, transition $(q_{\delta},M'') \xrightarrow[\enco{A}_\zeta]{\proj{\zeta_{i_{u_m}}(u_m)}{2}} (q, M')$ is always enabled as $\proj{\zeta_{i_{u_m}}(u_m)}{2}$ is a fresh letter.

As a last step, we need to show that $M'$ is well-formed with respect to $w$ and $M$.
Now take $u \in M(h_i)$ there are three cases:
\begin{enumerate}
\item $u$ does not appear in $w$ then $u \in M_0(h_i)$ and hence by construction $\zeta_0(u) \in M_0'(h^i)$.
 \item $u =u_j \neq u_m$, $j \in [1,m-1]$ then $u \in M_p(h_i)$ and  since $M_p'$ is well-formed with respect to to $u_1 \dots u_{m-1}$ and $M_p$,  then $\proj{\zeta_{i_{u_j}}(u)}{2} \in M'(h^i)$ with $u_j$ being the last occurrence of $u$ in $w$.  
    \item $u=u_m$ then this means that $h_i \in H_w$. Hence for what has been shown above $\proj{\zeta_{i_{u_m}}(u_m)}{2} \in M'(h^i)$.  
\end{enumerate}
Moreover for all $u \notin M(h_i)$ there are two cases
\begin{enumerate}
    \item $u$ not occurs in $w$ then $u \notin M_0(h_i)$ and by construction $\proj{\zeta_1(u)}{1} \notin M_0'(h^i)$
     \item $u$ occurs in $w$, let $u_j$ be the last occurrence of $u$ in $w$ then in the execution $(q_0,M_0) \xRightarrow[A]{w} (q,M)$ there is a transition  $(q_1,M_1) \xrightarrow[A]{u_j} (q_2,M_2)$ with $\delta' = (q_1,H_r,H_w,q_2) \in \setobs $ and $h_i \notin H_w$ and hence by construction $\proj{\zeta_{i_{u_j}}(u)}{2} \notin M'_2(h^i)$  and hence $u_j \notin M'(h^i)$. 
\end{enumerate}

\end{description}
 \item[$\Leftarrow$] 
Suppose $(q_0,M_0') \xRightarrow[\enco{A}_\zeta ]{\zeta(w)} (q,M')$ with $M'$ well-formed with respect to $w$ and some $M$.
We say that a step of the derivation is a sequence of transitions $(q_1,M_1) \xRightarrow[\enco{A}_\zeta ]{x} (q_2,M_2)$ with $q_1, q_2 \in Q$, $x = \zeta_{i_{u_j}}(u_j)$ or $x= \varepsilon$ and all intermediate states $q' \notin Q$.
The proof proceeds by induction on the length of the derivation of $\zeta(w)$ with a case analysis on the last step of the derivation:

\begin{description}
\item[Non-observable transitions:] 
Let $(q_p,M_p') \xrightarrow[\enco{A}_\zeta ]{\varepsilon} (q,M')$ with $M'$  well-formed with respect to $w$ and $M$, then $\delta'=(q_p, \rset ,q) \in \setsilent'$.  
Now by construction there exists $\delta=(q_p,H_\emptyset,q) \in \setsilent$ such that $\rset=\{h^i\mid h_i \in H_\emptyset\}$. By inductive hypothesis we know that $(q_0,M_0) \xRightarrow[A]{w} (q_p,M_p)$ and $M_p'$ well-formed with respect to $w$ and $M_p$. Now the transition can be taken in $A$ and all variables in $H_{\emptyset}$ are emptied maintaining the well-formedness of $M'$.

\item[Observable transitions:] Let 
$(q_p,M_p') \xRightarrow[\enco{A}_\zeta]{\zeta_{i_{u_m}}(u_m)} (q,M')$ 
then by construction there exists a transition $\delta=(q_p, H_r, H_w, q) \in \setobs$ such that there
are two transitions $(q_p, \alpha_{H_r}, q_\delta)$ and $(q_\delta, \alpha_{H_w} ,q)$ in $\setobs'$. 
By inductive hypothesis we know that $(q_0,M_0) \xRightarrow[A]{u_1 \dots u_{m-1}} (q_p,M_p)$ and $M_p'$ is well-formed with respect to $u_1 \dots u_{m-1}$ and $M_p$. Now $\delta$ can be executed because of well formedness of $M_p'$ if $\proj{\zeta_{i_{u_m}}(u_m)}{1} \in h^r$ with $h_r \in H_r$ then $u_m \in M_p(h_r)$, moreover we know that $\proj{\zeta_{i_{u_m}}(u_m)}{1} \notin M_p'(h^i)$ for all $h_i \not\in H_r$ hence $u_m \notin M_p(h_i)$. 
Then, transition $(q_p,M_p) \xrightarrow[A]{u_m} (q, M)$ can be executed and $u_m$ is added to variables $h \in H_w$, keeping the well formedness of $M'$ with respect to to $w$ and $M$.
\end{description}
\end{description}
    
\end{proof}

And finally, we can conclude:

\hratonu*
\begin{proof}
 $w \in \lang{A}$ if and only if $(q_0,M_0) \xRightarrow[A]{w} (q,M)$ with $q \in F$ if and only if by Lemma~\ref{lem:encodingHRAtoLAMA} we have that $(q_0,M_0') \xRightarrow[\enco{A}_\zeta ]{\zeta(w)} (q,M')$ if and only if by construction $q \in F'$ if and only if $\zeta(w) \in \lang{\enco{A}_\zeta}$.
\end{proof}

\section{Complexity of membership -- proofs}
\label{app:complexity-member-lama}
\lamanp*
\begin{proof}
Let LMP be the short for the membership problem for \lama.
{\bf NP-hardness:} we show that 3SAT can be reduced to LMP. 
Let $\exists X_1, \ldots \exists X_n C_1 \wedge \ldots \wedge  C_m$, where each $C_i$ is a clause of exactly $3$ literals (i.e.,   positive or negative versions of variables 
$X_j$). To encode 3SAT in \lama we consider for each $X_i$ two \lama variables $X_i$ and $\overline{X_i}$ belonging to layer $i$, and an initialisation gadget composed of two states and two transitions depicted in Fig. ~\ref{fig:init-xi}.
For each clause $C_j$ we consider a clause gadget composed of two states and three transitions as depicted in Fig. ~\ref{fig:clause-3SAT}.
In order to construct the $n$-\lama $A$ encoding the instance of 3SAT we connect $n$ initialisation gadgets (one for each $X_i$) and then the $m$ clause gadgets. The initial state of the first variable is the initial state of $A$ and the exit state of the last clause gadget is the unique final state of $A$. The input word is $w=(1)^n (1)^m$, where the $n$ occurrences of letter $1$ are used to non-deterministically initialise variable $X_i$ or $\overline{X_i}$. The next $m$ occurrences of letter $1$ are used to accept at least one literal in each clause. It is easy to see that $w$ is accepted by $A$ starting with an empty memory context if and only if there is a solution to the 3SAT instance.

{\bf NP-membership:} 
A certificate that a \lama $A=(Q, q_0, F, \Delta, V, n, M_0)$ accepts $w$ is a sequence $\sigma=s_0t_1s_1t_2\ldots s_m$ where all $s_i$ are configurations of the form $(q,M)$ with $q\in Q$ a state of $A$ and $M$ a memory context, $s_0=(q_0,M_0)$, and each $t_i$ is a transition in $\Delta$. The certifier executes $\sigma$ and checks if (1)  $s_m=(q',M')$ is a final configuration, i.e., if $q'\in F$ and (2) if $m\leq (|w| + |w|(|Q|\cdot |V|))$.
This upper bound comes from the observation that: 
\begin{itemize}
    \item each letter in $w$ needs exactly one transition;
    \item  between two consecutive observable transitions in $\sigma$ there could be at most $O(|V|\cdot |Q|)$  reset transitions,  as resetting the same variable twice has no effect and reaching the transition resetting a given variable can take at most $|Q|-1$ steps. 
\end{itemize}

This verification can be done in polynomial time since each observable transition takes $O(|V|\cdot|w|)$ steps and each reset transition  $O(1)$.

\end{proof}

\section{Non-emptiness for {$\nu$}{nu}-automata is in PSPACE -- proofs}
\label{app:empty-pspace}

\simplisubset*

\begin{proof}
Let $w\in\lang{\simpli(A)}$. Then there exists a path $(q_0,M'_0) \xRightarrow[\simpli(A)]{w} (q_f,M'_f)$ with $q_f \in F$ and intermediate configurations $(q_i,M'_i)_{i\geq 0}$.
We show by induction that  $w \in \lang{A}$ and the path accepting it is  a sequence of configurations $(q_i,M_i)_{i\geq 0}$ with $M'_i$ being the projection of $M_i$ over the letters in $K$. 

\begin{description}
    \item[Base case:] both automata start from the same initial state $q_0$. By definition, $\forall v\in V: k_v\in M_0(v)$ if and only if $M'_0(v)\neq \emptyset$. So, the property is true in the initial configuration. 

    \item[Inductive step:]  We examine the  transition  $(q_i,M'_i) \xrightarrow[\simpli(A)]{}(q_{i+1}, M'_{i+1})$. By construction, the same transition must exist in $A$. Now, depending on the transition,  consider the resulting memory context:
\begin{description}
    \item[reset:] the memory context of the variables in both automata are emptied, thus satisfying the property;
    \item[read:] if it is a read of letter $u$ in $v$, the transition is fired in $\simpli(A)$ so $u=k_v$. By inductive hypothesis, in the memory context of $A$ we have $k_v \in M_i(v)$;
    \item[write:] if it is a write of letter $u$ in $v$:
    \begin{itemize}
        \item if $k_v \not\in M'_i(v)$, then $M_i(v) = \emptyset$ and it means $u = k_v$ as it is always the first letter written in an empty variable in $\simpli(A)$. Thus $M_{i+1}(v)=M'_{i+1}(v)$.
        \item if $k_v \in M'_i(v)$ then letter $u$ is a token, $u = t_i \in T$, and is written in $v$ for $A$, $t_i \in M_{i+1}(v)$ but not for $\simpli(A)$. The projection of $M_{i+1}(v)$ to $K$ will still contain $k_v$, thus the invariant is satisfied.
    \end{itemize}
\end{description}
    
\end{description}

\end{proof}

\begin{definition} [FSM]
Let $A=(Q, q_0, F, \Delta, V, M_0)$ be a \nuauto and \simpli(A) its canonical version, we define an FSM $\fsm(A)=(K, S, s_0', E, \delta)$, where
\begin{itemize}
    \item $K = \{ k_v \mid v\in V \}$ is the alphabet;
    \item $S = Q \times 2^K$ is the set of  states, where $(q,m) \in S$ abstracts  configuration $(q,M)$ of $A$ where $k_v \in m \text{ if and only if } k_v\in M(v)$;
    \item $s_0 = (q_0, m_0)$, with $m_0=\{ k_v \mid M_0(v)\neq \emptyset \}$ is the initial state;
    \item $E=F\times 2^K$ is the set of  final states;
    \item $\delta \subseteq S \times (K\cup \{\epsilon\}) \times S$ is the set of  transitions with 
    \begin{itemize}
        \item $((q,m),k_v,(q',m)) \in \delta$ if and only if $((q,M),(v,\reads),(q',M)) \in \Delta$ and $k_v \in m$ (reads a key);
        \item $((q,m),k_v,(q',m')) \in \delta$ if and only if $((q,M),(v,\writes),(q',M')) \in \Delta$, $k_v \not \in m$ and $m' = m \cup \{k_v\}$ (writes a key);
        \item $((q,m),\epsilon,(q',m)) \in \delta$ if and only if $((q,M),(v,\writes),(q',M')) \in \Delta$ and $k_v\in m$ (writes a token);
        \item $((q,m),\epsilon,(q',m')) \in \delta$ if and only if $((q,M),\rset,(q',M')) \in \Delta$ and $m' = m \setminus \rset$ (resets variables in $\rset$.
    \end{itemize}
\end{itemize}

\end{definition}

\section{\lama should not  be in PSPACE}
\label{app:not-in-pspace}

The \lama presented in  Fig.  \ref{fig:not-in-pspace} accepts words of length in $O(2^{2^n})$, when its initial memory context is  empty. 
We can decompose this automaton in three parts:
\begin{itemize}
    \item The first part consists of the state $q_0$ with a self-loop allowing to write as many different letters as needed in $X^1$. This is the only transition actually adding new letters to $X^1$. The same $X^1$ is read to enable the final transition to $q_f$. State $q_0$ can be left through a $\varepsilon$-transition to $q_1$.
    \item The second part includes the states from $q_1$ to $q_3$ will erase at least half of the letters stored in $X^1$. To do so, the two looping transitions between $q_1$ and $q_2$ will split in half the letters of $X^1$ between $Y^2$ and $Z^2$. Then the transition from $q_1$ to $q_3$ reset $X^1$, and the self-loop on  $q_3$ allows to store back in $X^1$ the letters that were copied in $Y^2$. Finally, the outgoing transition of $q_3$ resets $Y^2$ and $Z^2$, definitely erasing the letters from $Z^2$. The whole process is non-deterministic, however, it is actually required to copy enough letters through this process so  that the reset from $q_3 \rightarrow q_4$ will erase every letter in $X^1$, which is required for the final transition.
    \item The third part contains all states from $q_4$ to $q_m$. It is a repetition of the process described so far to ensure an exponential amount of iterations over the second part. In state $q_{m-2}$ (and respectively for all the previous states), there are two outgoing transitions: one that requires to read in $A^n$ and the other one that writes in (initialize) it. The reading transition is the one leading to the final state but it cannot be crossed as long as $A^n$ was not initialized through the other transition. The writing transition leads to the state $q_{m-1}$ and eventually back to $q_1$, while resetting every previous $A^i$ for $i < n$. As these variables are empty, all their initializations have to be done again.
\end{itemize}
The third part  iterates back to the states of the second part $2^{n-2}$ times. Every time we pass through the second part, we lose at least half of the letters stored in $X^1$.  Hence to be able to cross the last transitions, it is required that during the first part, at least $2^{2^{n-2}}$ letters are stored in $X^1$.

\begin{figure}
    \centering
\begin{tikzpicture}[node distance=2cm]
    \node[state] (q1) at (0,0) {$q_1$};
    \node[state, left of = q1, node distance = 1.5cm] (q0) {$q_0$};
    \node [left of = q0, node distance = 1cm] (start) {};
    \draw [->] (start)--(q0);
    \node[state, below of = q1] (q2) {$q_2$};
    \node[state, right of = q1, node distance = 2.5cm] (q3) {$q_3$};
    \node[state, right of = q3, node distance = 2.5cm] (q4) {$q_4$};
    \node[state, above of = q4] (q5) {$q_5$};
    \node[state, right of = q4] (q6) {$q_6$};
    \node[state, above of = q6] (q7) {$q_7$};
    \node[right of = q6] (dots) {$\cdots$};
    \node[state, right of = dots] (qm2) {$q_{m-2}$};
    \node[state, above of = qm2] (qm1) {$q_{m-1}$};
    \node[state, below of =qm2] (qm) {$q_m$};
    \node[state, left of =qm, double] (qf) {$q_f$};

    \draw [->] (q0) to[out= 50, in=130, looseness=5]node[above]{$(X^1,\writes)$} (q0);
    \draw [->] (q0) to node[above]{$\varepsilon$} (q1);
    \draw [->] (q1) to[out= -45, in=45]node[right]{$\begin{array}{l}(X^1,\reads) \\(Y^2,\writes)\end{array}$} (q2);
    \draw [->] (q2) to[out= 135, in=-135]node[left]{$\begin{array}{l}(X^1,\reads) \\(Z^2,\writes)\end{array}$} (q1);
    \draw [->] (q1) to node[above]{$\rset(X^1)$} (q3);
    \draw [->] (q3) to[out= 50, in=130, looseness=5]node[above]{$\begin{array}{l}(X^1,\writes) \\(Y^2,\reads)\end{array}$} (q3);
    \draw [->] (q3) to node[below]{$\rset(Y^2, Z^2)$} (q4);
    \draw [->] (q4) to node[left]{$(A^3, \writes)$} (q5);
    \draw [->] (q4) to node[above]{$(A^3, \reads)$} (q6);
    \draw [->] (q6) to node[above]{$(A^4, \reads)$} (dots);
    \draw [->] (q6) to node[left]{$(A^4, \writes)$} (q7);
    \draw [->] (dots) to node[above]{$(A^n, \writes)$} (qm2);
    \draw [->] (qm2) to node[left]{$(A^{n-1}, \writes)$}(qm1);
    \draw [->] (qm2) to node[left]{$(A^n, \reads)$}(qm);
    \draw [->] (qm) to node[below]{$(X^1, \reads)$}(qf);

    \draw [->, rounded corners] (q5) --node[above]{$\varepsilon$} (0, 2)-- (q1);
    \draw [->, rounded corners] (q7) to[out=160, in= 10]node[above]{$\rset(A^3)$} (0, 2.3)-- (q1);
    \draw [->, rounded corners] (qm1) to[out = 155,in = 15]node[above]{$\rset(\{A^l \mid l \in [3,m-1]\})$} (0, 2.6)--(q1);
    
\end{tikzpicture}

    \caption{Example of a \lama with words whose length is in $O(2^{2^n})$.}
    \label{fig:not-in-pspace}
\end{figure}
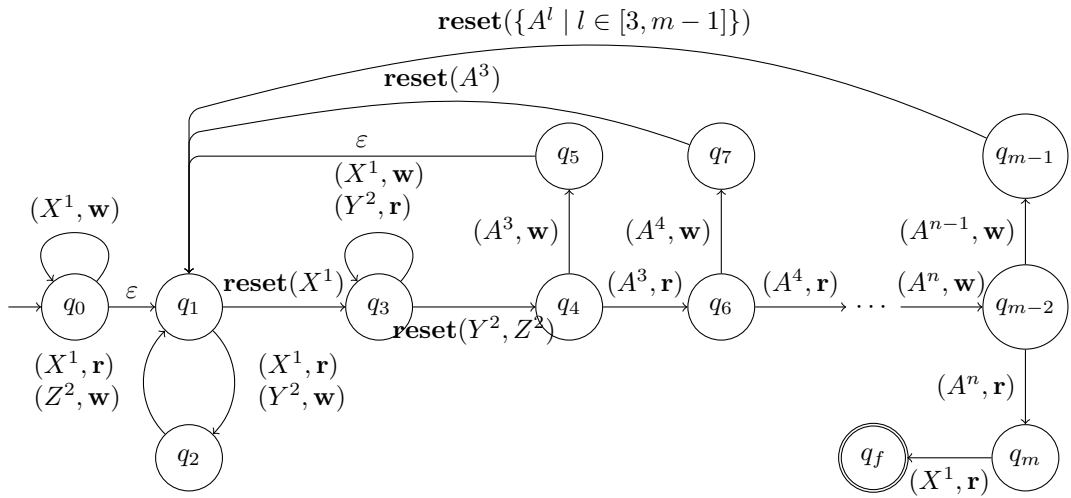
\end{document}